\documentclass{article}
\usepackage{graphicx} 
\usepackage{tikz} 

\usepackage{subcaption} 
\DeclareCaptionLabelFormat{panel}{Panel~#2}
\usepackage{amssymb}
\usepackage{amsmath}
\usepackage{amsthm} 

\newtheorem{proposition}{Proposition}
\newtheorem{lemma}{Lemma}

\usepackage[
backend=biber,
style=apa ,
hyperref=true ,
]{biblatex} 
\usepackage[
  colorlinks=true,
  citecolor=blue,
  linkcolor=black,
  urlcolor=blue
]{hyperref}

\usepackage{xcolor}

\title{Theory of Household Portfolio Choice: Pitfalls in Applications of the Collective Model}
\author{Azar Aliyev}
\date{\today}

\begin{document}

\maketitle

\begin{abstract}
A number of recent empirical papers rely on a collective model to analyze the portfolio choice of spouses, their heterogeneous risk preferences, and intra-household bargaining. I study applications of this model and highlight some important shortcomings. In its classic form, the model generates a counterintuitive result: an increase in the risk aversion of a household member can lead to an increase in household risk-taking. I offer a formal characterization of this pattern and link it to previous theoretical findings. I highlight further issues with applications of the collective approach to the portfolio choice problem in the contexts of bargaining and wealth inequality. I reconcile recent household finance papers with these findings and point to potential confusion in the literature. I emphasize existing alternatives that do not exhibit most of these issues, yet argue that there is a lack of a consistent and rigorous modeling approach.
\end{abstract}

\section{Introduction}
The household finance literature almost exclusively assumes that the household acts as a single decision-making unit. For example, research interested in the relationship between household characteristics and financial outcomes commonly uses education, age, or risk aversion of the ``household head" as a proxy for household characteristics. This unitary approach only works if the household consists of a single member or if all members of the household have identical objectives. 

This paper studies portfolio choice decisions or, more precisely, risk-taking decisions of households. It is common to study heterogeneity in the risky share, defined as the share of the portfolio invested in risky assets. The fundamental result of a unitary framework is that an increase in the risk aversion of an agent always leads to a decrease in risk-taking. For example, a canonical formula from \cite{merton1969lifetime}, \(\alpha^\star = \frac{\mu}{A\sigma^2} \), suggests that the optimal risky share, \(\alpha^\star\), is inversely related to the risk aversion, \(A\).

However, a household may consist of more than one member and these members may differ in their preferences for risk and such a unitary model may not be suitable\footnote{Partners may also have heterogeneous beliefs about the stock market or investment horizons, but this is not the focus of this paper.}. A growing strand of empirical work acknowledges that partners may differ in their risk attitudes and studies portfolio choice with measures of risk aversion for each spouse. Empirical estimates using the Health and Retirement Study (HRS), for example, show that a majority of spouses indeed differ in their risk preferences (\cite{barsky1997preference, kimball2008imputing}).

There are a number of reasons why one might be interested in using measures of individual risk aversion of both spouses. On the one hand, one might attempt to better explain the empirical variation in the risky shares or participation of married partners across time and in the cross-section (\cite{mazzocco2004saving, yilmazer2015portfolio,thornqvist2014bargaining}). In particular, this can help explain changes in portfolio choice after transitions to retirement or death of a member and can motivate time-varying risk aversion (\cite{addoum2017household}). On the other hand, one might be interested in measuring gender inequality or asymmetry in influence over decision-making under uncertainty (\cite{ke2021wears, gu2024gender, carlsson2013influence}). This empirical household finance research often explicitly or implicitly relies on a collective approach whereby individual preferences are scaled by some Pareto weights and aggregated into a household objective.

This paper has two major goals. The first is to discuss alternative approaches of applying the collective framework to the household portfolio choice problem and emphasize the drawbacks of these applications. In particular, I present a \textit{non-monotonicity} result: an increase in the risk aversion of a member may lead to an increase in the household risky share. To the best of my knowledge, this has not been previously pointed out. In addition, I discuss other issues with this framework related to its applications to contexts with heterogeneous wealth levels and bargaining. The second is to review the existing work on intra-household portfolio choice in light of these results, reconcile different approaches with each other, and point to the potential confusion in the recent literature.

Pioneered by \cite{chiappori1988rational} and \cite{browning1998efficient} the collective modeling approach is widely used in economics (see \cite{jayachandran2025women} for a recent review of applications in a gender economics context)\footnote{There are some alternative approaches not discussed here. For example, one might model exogenous changes in family structure (\cite{love2010effects}). A less common  approach to intra-household decision making is to rely on an information asymmetry framework (\cite{buchmann2025good}).}. For example, \cite{mazzocco2004saving} uses a collective framework to study the consumption-saving behavior of married couples. A number of empirical papers studying portfolio choice also rely on this collective framework. However, there is a lack of a consistent and rigorous modeling approach.

As a preview and in order to provide further motivation, I demonstrate the main result with the following example. Assume a household consists of two partners and chooses between a risky alternative \(s\), that provides \$1 worth of a public good with probability 0.9 and \$60 worth of a public good with probability 0.1 and a safe alternative \(f\), that provides a certain \$5 worth of a public good\footnote{This gamble is taken from \cite{apesteguia2018monotone}.}. Assume both partners have a CRRA utility over consumption with preference parameters \(\gamma_1\) and \(\gamma_2\) and the household maximizes the sum of expected individual utilities. I assume that the risk aversion of the first member is fixed at \(\gamma_1=0.08\), and vary the risk aversion of the second member. At low values of the risk aversion of the second member (say \(\gamma_2 = 0.08\)), the household prefers the risky alternative. At a slightly higher value (say \(\gamma_2 = 0.8\)), the household starts preferring the safe alternative. However, if one continues to increase the risk aversion of the second member (say \(\gamma_2=8\)), the household will now prefer the risky alternative again\footnote{Under \(\gamma_2=0.08\), we get \(\mathbb{E}[U_h(s)]=\mathbb{E}[U_1(s)]+\mathbb{E}[U_2(s)]\approx11.4 > U_h(f) = U_1(f)+U_2(f)\approx9.6 \). Under \(\gamma_2=0.8\), we get \(\mathbb{E}[U_h(s)]=\mathbb{E}[U_1(s)]+\mathbb{E}[U_2(s)]\approx11.3 <U_h(f)= U_1(f)+U_2(f)\approx11.7 \). Finally, under \(\gamma_2=8\), we get \(\mathbb{E}[U_h(s)]=\mathbb{E}[U_1(s)]+\mathbb{E}[U_2(s)]\approx5.5 > U_h(f) = U_1(f)+U_2(f)\approx4.8 \)}.

This example demonstrates a rather puzzling result: increasing the risk aversion of a member can increase the risk-taking of the household. I refer to this internal inconsistency as non-monotonicity. This pattern is counterintuitive and implies that such a collective framework should not be used in a setting where one is interested in risk preference heterogeneity. For example, one immediate practical implication is an identification problem: for a given level of risk aversion of the first member, different levels of risk aversion of the second member may be consistent with the observed behavior of the household. In this paper, I place the counterintuitive results obtained in this example in a more general context. In particular, I generalize to a continuous-choice and continuous-risk setup and derive some precise propositions.

I show that the optimal household risky share can increase with the risk aversion of its member under a household objective that additively aggregates individual CARA or CRRA utilities that depend on consumption of a public good. I provide intuition for this result, offer a formal characterization of this pattern, and show that it occurs on a wide range of economically meaningful values of exogenous parameters. In addition, I relate this result to the earlier theoretical findings of \cite{apesteguia2018monotone} who study the random utility model (RUM) using a unitary framework. I discuss other aggregation methods and highlight which methods always obey the intuitive monotonicity property (i.e., optimal risky share of the household always decreasing with the risk aversion of its member).

I also discuss the properties of this framework related to Pareto weights and wealth sensitivity. In particular, I show that different modeling assumptions result in vastly different empirical estimates of these Pareto weights, often interpreted as bargaining power of spouses. Moreover, I show that while the individual CRRA solution does not, the household solution that additively aggregates individual utilities depends on the level of wealth. Finally, I show that for lower levels of wealth, the household closely mimics the preferences of the more risk averse member, and for slightly higher levels of wealth, the household closely mimics the preferences of the less risk averse member.

To the best of my knowledge, the existing literature has not highlighted the non-monotonicity of household's risk-taking with respect to the risk aversion of individuals consuming a public good. I show that such monotonicity is sometimes falsely taken for granted (see, for example, \cite{yilmazer2015portfolio}). I discuss the model used in a recent study by \cite{gu2024gender} who obtain a monotonic solution, yet mistakenly claim that their approach can be micro-founded by a collective framework whereby individual utilities are aggregated additively. I further demonstrate a potential source of confusion: simplifications to the objective functions made in the unitary approach (e.g. dropping the constants or using a mean-variance utility instead of the CARA utility) may not be applicable in the collective framework. I discuss alternative modeling approaches used in the empirical literature, such as aggregation of individual utilities by product, and show that they do not exhibit some of the counterintuitive features discussed above.

This paper contributes to the existing literature on multiple fronts. First, it contributes to a broad understanding of household portfolio choice and participation (see \cite{gomes2021household, campbell2026household} for a general review of the portfolio choice and stock market participation literature and section 4.4 of \cite{gomes2021household} for a partial review of the intra-household finance literature). Second, it contributes to the understanding of applications of the collective approach to decision-making, intra-household bargaining, and gender asymmetry. There is a growing body of literature on risk-taking behavior concerning intra-household bargaining and gender inequality. Using U.S. data, \cite{ke2021wears} estimates gender asymmetry when it comes to financial risk-taking using a finance-profession indicator as a proxy for finance sophistication.  \cite{gu2024gender} attempt to estimate asymmetry in decision-making using Australian survey data, which contain estimates of individual risk aversion of both spouses. In the context of a rural Chinese town, \cite{carlsson2013influence} discuss a promising experimental approach to estimating asymmetry, yet authors do not provide or discuss a relevant model.

The paper is organized as follows. Section \ref{section_Setup} presents the setup of the problem and discusses alternative modeling approaches common in the literature. Section \ref{section_Results} discusses solutions of these models and derives the main results. Section \ref{section_Literature} reconciles the findings with the existing literature and emphasizes some pitfalls. Section \ref{section_Conclusion} concludes.

\section{Setup}
\label{section_Setup}
In this section, I precisely describe the simple framework at hand as well as highlight the alternative assumptions within this framework made by the existing literature.

The household consists of two individuals \(i \in \{1,2\}\). I will refer to these individuals interchangeably as agents, members, spouses, and partners. The household first decides how to invest its initial endowment \(W>0\), then returns are realized, and all resulting wealth is consumed by the members. There is no labor income. Matching in the marriage market is left outside the scope of this framework, and risk preferences of the partners are taken as given. Partners are fully committed and must collectively decide on the portfolio allocation that maximizes some household objective.

There are two assets available in the market: a risk-free asset with a safe return \(r_f\) and a risky asset with excess return \(\tilde{x}\). Unless explicitly stated otherwise, I assume \(\tilde{x} \sim \mathcal{N}(\mu, \sigma^2)\). A potential disagreement in beliefs about the risky return distribution is also outside of the scope of this framework. Portfolio allocation is summarized by the choice of the risky share, \(\alpha\), defined as the share of endowment invested in the risky asset. Hence, household-level consumption is 
        \begin{equation}
            C_h=\alpha W(1+r_f+\tilde{x})+(1-\alpha)W(1+r_f)= W(1 + r_f +\alpha \tilde{x})
        \end{equation}
        Note that,     
        \begin{equation}
        C_h \sim \mathcal{N} (W(1+r_f+\alpha\mu), \alpha^2W^2\sigma^2)    
        \end{equation}

There are two extreme assumptions that can be made about individual consumption. I focus on the approach whereby both members are assumed to consume a single public good: \(C_1=C_2=C_h\). This approach is taken by the existing portfolio choice literature studying intra-household dynamics (e.g. see \cite{addoum2017household,yilmazer2015portfolio,thornqvist2014bargaining})\footnote{An alternative approach not discussed in this paper is to assume that each member consumes only some private good: \(C_1+C_2=C_h\) (e.g. see \cite{mazzocco2004saving}).}. 

I consider the two most common specifications of the individual utility functions:  CARA utility with absolute risk aversion \(A_i>0\): 
\begin{equation}
    U^{cara}_i(C_i) = \frac{-exp(-A_iC_i)}{A_i}
\end{equation} CRRA utility with relative risk aversion \(\gamma_i>0\): 
\begin{equation}
    U^{crra}_i(C_i) = \frac{C_i^{1-\gamma_i}}{1-\gamma_i}
\end{equation}

The household objective is some aggregation of individual utilities.  Let \(\lambda_i>0\) represent the Pareto weight of individual \(i\) and assume \(\lambda_1+\lambda_2=1\). These weights capture the extent to which individual preferences are represented in the household's objective. Since full commitment is assumed, they can be thought of as exogenous bargaining weights. 

The most common approach is to aggregate individual expected utilities by a weighted sum (see, for example, \cite{serra2022risk, mazzocco2004saving, yilmazer2015portfolio, gu2024gender}):
\begin{equation}
     \mathbb{E}[U_h]=\sum_i\lambda_i\mathbb{E}[U_i]
\end{equation}
This approach is intuitive because by varying \(\lambda\)'s one will span the whole set of Pareto-efficient outcomes (\cite{vermeulen2002collective}). An alternative, however, is to aggregate by taking the expectation of a weighted product of individual utilities (see \cite{addoum2017household}):
\begin{equation}
    \mathbb{E}[U_h]=\mathbb{E}\left[\prod_iU_i^{\lambda_i}\right]
\end{equation}

In the next section, I will consider both of these methods in turn.  For the sake of completeness, I will also consider aggregation by the product of expected utilities: \(\mathbb{E}[U_h]=\prod_i\mathbb{E}[U_i]^{\lambda_i}\), as well as the Nash Bargaining approach.

\section{Results}
\label{section_Results}

Before proceeding with the collective framework, it is useful to consider the traditional individual maximization problem, also referred to as the unitary framework. This yields an individually optimal risky share \(\alpha^\star_i\) which can be thought of as a counterfactual risky share that would have been picked by partner \(i\) if she had total control over decision-making and maximized her own utility.
    
\begin{equation}
    \alpha^*_i \equiv \operatorname*{arg\,max}_{\alpha} \mathbb{E} [U_{i}(C_i)]
    \label{obj_unitary}
\end{equation}
    
\subsection{Individual Risk Aversion and Portfolio Choice}
\label{subsection_public}

In this subsection, I derive the main results on the relationship between the risk aversion of an individual member of a household and household's portfolio choice. I demonstrate that unlike the unitary approach, the collective framework yields a counterintuitive result in this portfolio choice setting: the share of household's portfolio invested in the risky asset may increase with a risk aversion of its member. I discuss different utility aggregation assumptions in turn, with a special focus on the case of aggregation by sum.

\subsubsection{Sum of expected utilities}
\label{subsection_sum_CARA}

Let \(\theta = (\mu,\sigma,r_f,W,\lambda_1,\lambda_2) \) be a vector of all exogenous parameters of the model. Assume that the household maximizes the sum of expected utilities:
\begin{equation}
    \max_{\alpha} \; \mathbb{E} [U_{h}(C_h)] = \max_{\alpha} \; \sum_i \lambda_i \mathbb{E} [U_i(C_h)] 
    \label{obj_sum_public}
\end{equation}

\begin{proposition}
\label{prop_nonmonotonicity_sum}
     For both CARA and CRRA individual utilities, there exists \(\theta  \in \mathbb{R}_{+}^6 \) such that solution \(\alpha^\star\) to the household problem in (\ref{obj_sum_public}) is a non-monotonic function of the risk aversion of a member.
\end{proposition}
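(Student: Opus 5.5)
The plan is to prove the proposition by explicit construction. Non-monotonicity only needs one parameter vector \(\theta\) and three increasing values of the second member's risk aversion, say \(a<b<c\) (for \(A_2\) or \(\gamma_2\)), such that \(\alpha^\star(a)>\alpha^\star(b)\) and \(\alpha^\star(b)<\alpha^\star(c)\). Member 1's risk aversion is held fixed throughout. To certify these inequalities without solving for \(\alpha^\star\) in closed form, I will first establish strict concavity of the household objective in (\ref{obj_sum_public}). Each \(U_i\) is strictly concave and \(C_h\) is affine in \(\alpha\), so each \(\mathbb{E}[U_i(C_h)]\) is strictly concave in \(\alpha\). With \(\lambda_i>0\), the weighted sum is strictly concave and has a unique maximizer. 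The sign of the first derivative at any test point \(\bar\alpha\) then tells on which side of \(\bar\alpha\) the optimum lies. Hence it suffices to evaluate derivative signs at a couple of test points.

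\textbf{CARA case.} Under normal returns the objective is available in closed form:
\[
\mathbb{E}[U_i(C_h)]=-\frac{1}{A_i}\exp\!\Big(g_i(\alpha)\Big),\qquad g_i(\alpha)=-A_iW(1+r_f+\alpha\mu)+\tfrac12A_i^2\alpha^2W^2\sigma^2 .
\]
The first-order condition becomes
\[
\sum_i\lambda_i\, e^{g_i(\alpha)}\big(\mu-A_iW\sigma^2\alpha\big)=0 .
\]
This shows that \(\alpha^\star\) lies between \(\alpha_1^\star=\mu/(A_1W\sigma^2)\) and \(\alpha_2^\star=\mu/(A_2W\sigma^2)\). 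Its position is governed by the effective weights \(\lambda_ie^{g_i(\alpha^\star)}\).

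The intuition for non-monotonicity is that \(e^{g_2}\) contains the factor \(e^{-A_2W(1+r_f)}\). Raising \(A_2\) therefore shrinks the scale of member 2's utility exponentially. This competes with the fact that member 2 pulls \(\alpha^\star\) down harder per unit of weight. Consequently, over some range the household's effective weight on member 2 collapses fast enough that \(\alpha^\star\) drifts back up toward \(\alpha_1^\star\). I will choose \(\theta\) to make this happen: moderate \(W(1+r_f)\), small \(A_1\), and a small \(\lambda_1\) so that member 2 matters initially. For concrete \(a<b<c\) I will then pick test points \(\bar\alpha_{ab}\) and \(\bar\alpha_{bc}\). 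I will show that the FOC expression at \(\bar\alpha_{ab}\) is negative under \(A_2=b\) and positive under \(A_2=a\), which gives \(\alpha^\star(b)<\bar\alpha_{ab}<\alpha^\star(a)\). Similarly, I will show the expression at \(\bar\alpha_{bc}\) is negative under \(b\) and positive under \(c\). These are finitely many explicit exponential inequalities.

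\textbf{CRRA case.} Here lies the main obstacle. Under normal returns, consumption is negative with positive probability, so for \(\gamma_i\ge1\) the expected utility is \(-\infty\) for every \(\alpha\neq0\) and the problem degenerates. I would first treat the statement's "unless explicitly stated otherwise" as licence to use a bounded two-point excess return. This return takes values \(x_H>0>x_L>-(1+r_f)\), with mean \(\mu\) and variance \(\sigma^2\) determined by \(\theta\), and \(\alpha\) is restricted so that consumption stays positive. The gamble in the introduction is exactly such an example: there \(\gamma_2\in\{0.08,0.8,8\}\) produced choices risky, safe, risky. I will embed it as a two-point return and rerun the argument with continuous \(\alpha\). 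The objective \(\sum_i\lambda_i\big(p(W(1+r_f+\alpha x_H))^{1-\gamma_i}+(1-p)(W(1+r_f+\alpha x_L))^{1-\gamma_i}\big)/(1-\gamma_i)\) is again strictly concave, so the same sign-checking of its derivative at test points works.

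The driving mechanism is the analogue of the CARA one. For \(\gamma_2>1\), member 2's utility is \(C^{1-\gamma_2}/(1-\gamma_2)\), which is of size \(W^{1-\gamma_2}\). With \(W(1+r_f)>1\), this scale vanishes as \(\gamma_2\) grows, so member 1 eventually dominates and \(\alpha^\star\) rises again. If a rigorous argument over a finite check is preferred, I would alternatively show that \(\alpha^\star(\gamma_2)\to\alpha_1^\star\) as \(\gamma_2\to\infty\) while \(\alpha^\star(\gamma_2^0)<\alpha_1^\star\) at some intermediate \(\gamma_2^0\). Together these force non-monotonicity.
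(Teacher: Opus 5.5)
Your strategy is the paper's: exhibit one parameter vector, and for CARA use the closed-form normal--CARA first-order condition. What differs is how you certify the example. The paper only solves the FOC numerically (CARA) or by Monte Carlo (CRRA) and points to Figure \ref{fig_nonmonotonicity_sum}. Your strict concavity plus FOC signs at test points reduces the claim to finitely many checkable inequalities, which is more rigorous.

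You still have to commit to numbers. The paper's CARA parameters (\(\mu=0.1,\sigma=0.2,r_f=0,W=2,\lambda_i=1/2,A_1=2\)) work with the single test point \(\bar\alpha=0.6\). The expression \(\sum_i\lambda_i e^{g_i(0.6)}(\mu-A_iW\sigma^2\cdot 0.6)\) is positive at \(A_2=2\), negative at \(A_2=3\) and positive at \(A_2=4\). Hence \(\alpha^\star(3)<0.6<\min\{\alpha^\star(2),\alpha^\star(4)\}\).

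For CRRA your diagnosis is correct and goes beyond the paper. With exact normal returns, consumption is nonpositive with positive probability for every \(\alpha\neq0\). Under the usual convention this gives \(\alpha^\star\equiv0\), so the CRRA half cannot hold as literally stated. The paper avoids this only implicitly: its 10,000 simulated draws keep consumption positive ``in practice,'' so its example is really for a bounded empirical distribution. Your two-point return makes the same repair explicit, at the price of changing the return law.

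The introduction's binary risky/safe/risky pattern does not by itself transfer to continuous \(\alpha\), so the check must actually be run. It does go through. Take \(W=5\), \(r_f=0\), \(\tilde{x}\in\{-0.8,11\}\) with probabilities \(0.9,0.1\), \(\lambda_i=1/2\), \(\gamma_1=0.08\), and \(\alpha\in(-1/11,5/4)\). The FOC at \(\bar\alpha=0.5\) is positive for \(\gamma_2=0.08\), negative for \(\gamma_2=0.8\), and positive for \(\gamma_2=8\).

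Your fallback limit argument, however, is wrong as stated. Member 2's marginal utility is driven by the worst state, \(M_2(\alpha)\approx(1-p)Wx_L\,[W(1+r_f+\alpha x_L)]^{-\gamma_2}\), not by the scale \(W^{1-\gamma_2}\). It therefore vanishes as \(\gamma_2\to\infty\) only where worst-state consumption exceeds 1, and \(W(1+r_f)>1\) is not sufficient. In your own embedding \(\alpha_1^\star\approx1.16\), where worst-state consumption is about \(0.36\). There \(\alpha^\star(\gamma_2)\to1\), the share at which worst-state consumption equals 1, not \(\alpha_1^\star\).

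Moreover, approaching a limit from below after some \(\gamma_2^0\) does not exclude an increasing \(\alpha^\star\). You also need the anchor \(\alpha^\star(\gamma_1)=\alpha_1^\star\) (identical members) to exceed \(\alpha^\star(\gamma_2^0)\). With both corrections the argument works.

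A similar caveat applies to your CARA intuition. At fixed \(\alpha\), \(g_2\) is eventually increasing in \(A_2\) because of the term \(\tfrac12A_2^2W^2\alpha^2\sigma^2\). In the paper's example \(\alpha^\star\) falls again for large \(A_2\) (about \(0.50\) at \(A_2=100\)). So the ``drift back toward \(\alpha_1^\star\)'' is an intermediate-range effect, which is harmless for a three-point check.
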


\begin{proof}
    In the Appendix Section \ref{sec_AppProofs}.
\end{proof}

Figure \ref{fig_nonmonotonicity_sum} uses the two examples from the proof to illustrate the result. Optimal household risky share is plotted against individually optimal ones which serve as a reference.

\begin{figure}[thb]
\centering
\begin{subfigure}{0.45\textwidth}
    \includegraphics[width=1\linewidth]{"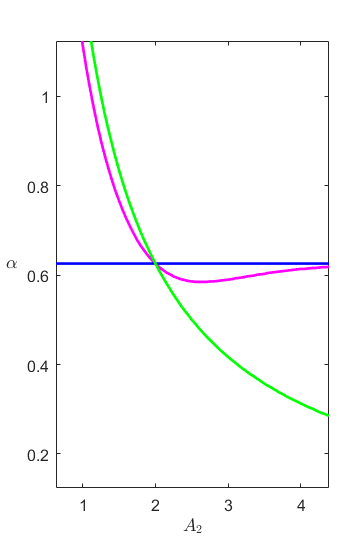"}
    \caption{CARA}
\end{subfigure}
\begin{subfigure}{0.45\textwidth}
    \includegraphics[width=1\linewidth]{"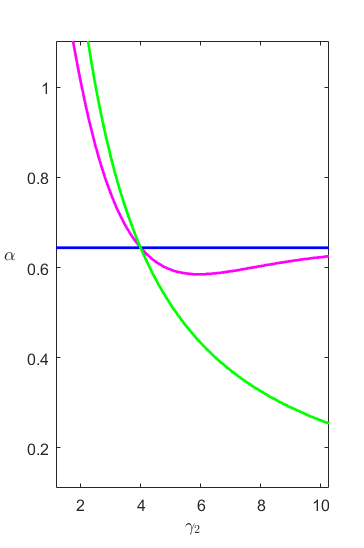"}
    \caption{CRRA}
\end{subfigure}

    \caption{\footnotesize 
    This figure plots the optimal risky share \(\alpha\), against the risk aversion of the second member of the household for CARA (Panel A) and CRRA (Panel B) individual utility functions. Blue curves plot \(\alpha^\star_1\) and green curves plot \(\alpha^\star_2\) which are the solutions to the single's problem and are defined in equation (\ref{obj_unitary}). The magenta curves depict the household-level optimal risky share \(\alpha^\star\), which solves the objective in (\ref{obj_sum_public}). Set, \(\mu= 0.1, \ \sigma=0.2, \ r_f=0, \ W=2,\) and \(\lambda_1=\lambda_2=0.5\). In Panel A set \(A_1=2\) and in Panel B set \(\gamma_1=4\). There are 10,000 simulations of a normally distributed shock and consumption is always positive in practice. 
    }
    \label{fig_nonmonotonicity_sum}
\end{figure}

As can be seen from the figure, the optimal household risky share initially decreases with the risk aversion of the second member but eventually starts sloping upwards even though the risk aversion continues to increase. Thus, monotonicity of the household risky share with respect to the risk aversion of a member cannot be guaranteed under collective approach whereby individual CARA or CRRA utilities are aggregated additively. Moreover, Figure \ref{fig_nonmonotonicity_sum} demonstrates that such counterintuitive result occurs under economically meaningful values of \(\mu,\sigma,r_f,\) and individual risk aversion parameters. Section \ref{sec_AppVariations} of the Appendix shows that non-monotonicity arises under other values of exogenous variables including the distributional assumptions of the risky asset return. In particular, it also shows that household risk-taking can increase with the risk aversion of the spouse even if that spouse has a higher bargaining weight.

I now provide intuition for non-monotonicity result using the following Lemma. Let \(M_i\) be the derivative of the individual expected utility of member \(i\) at the optimal risky share with respect to the risky share:
\begin{equation}
    M_i(\alpha^\star) \equiv \frac{\partial \mathbb{E} (U_{i}(\alpha^\star))}{\partial \alpha}
\end{equation}

\begin{lemma}
\label{lemma_IFT}
Solution \(\alpha^\star\) to the household problem in (\ref{obj_sum_public}) is a monotonic function of the risk aversion of member \(i\) if and only if \(M_i(\alpha^\star)\) is a monotonic function of the risk aversion of member \(i\) under both CARA and CRRA individual utilities.
\end{lemma}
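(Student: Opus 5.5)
The plan is an implicit-function-theorem argument on the household first-order condition. I fix member $i$ (say $i=2$) and write $\kappa$ for her risk aversion ($A_2$ under CARA, $\gamma_2$ under CRRA). I write $M_2(\alpha;\kappa)$ for $\partial \mathbb{E}[U_2(C_h)]/\partial\alpha$ and $M_1(\alpha)$ for the analogous derivative of member 1's expected utility, which does not depend on $\kappa$. "Monotonic in $\kappa$" will be read as: the derivative with respect to $\kappa$ keeps a constant sign. For $M_2$ this derivative is the partial $\partial M_2/\partial\kappa$, taken with $\alpha$ held fixed and then evaluated at $\alpha=\alpha^\star(\kappa)$. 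I will state this reading explicitly, since the lemma is only meaningful under it.

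First I establish regularity. Consumption $C_h=W(1+r_f+\alpha\tilde{x})$ is affine in $\alpha$ and each $U_i$ is strictly concave. Hence each $\mathbb{E}[U_i(C_h)]$ is strictly concave in $\alpha$, and so is the weighted sum in (\ref{obj_sum_public}) because $\lambda_i>0$.

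For CARA I use the closed form
\begin{equation*}
\mathbb{E}[U_i]=-\frac{1}{A_i}\exp\Bigl(-A_iW(1+r_f)-A_i\alpha W\mu+\tfrac12 A_i^2\alpha^2W^2\sigma^2\Bigr).
\end{equation*}
It is smooth in $(\alpha,A_i)$, and the household objective has a unique interior maximizer $\alpha^\star$.

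For CRRA I differentiate under the expectation by dominated convergence. This requires positive consumption: I will restrict attention to a bounded-below return distribution, or to the region where consumption is positive, as the paper does in practice. With that, $\alpha^\star$ is characterized by
\begin{equation*}
F(\alpha,\kappa)\equiv\lambda_1 M_1(\alpha)+\lambda_2 M_2(\alpha;\kappa)=0 .
\end{equation*}

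Next, $\partial F/\partial\alpha=\lambda_1 M_1'(\alpha)+\lambda_2\,\partial_\alpha M_2(\alpha;\kappa)$ is the second derivative of a strictly concave objective. It is therefore strictly negative at $\alpha^\star$. The implicit function theorem then gives a differentiable $\alpha^\star(\kappa)$ with
\begin{equation*}
\frac{d\alpha^\star}{d\kappa}=-\frac{\lambda_2\,\partial M_2(\alpha^\star;\kappa)/\partial\kappa}{\lambda_1 M_1'(\alpha^\star)+\lambda_2\,\partial_\alpha M_2(\alpha^\star;\kappa)} .
\end{equation*}
The denominator is negative and $\lambda_2>0$, so $\operatorname{sign}(d\alpha^\star/d\kappa)=\operatorname{sign}(\partial M_2/\partial\kappa\,|_{\alpha^\star})$ for every $\kappa$. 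Consequently $\alpha^\star$ is monotonic (the sign is constant over $\kappa$) if and only if $M_2$ evaluated at the optimum is monotonic in $\kappa$ in the sense above. Both directions follow from this single sign identity.

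The same computation covers CARA and CRRA, because only strict concavity and smoothness were used. The roles of the two members are symmetric.

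I expect the main obstacle to be the CRRA regularity step rather than the algebra. The issues are justifying differentiation under the integral in both $\alpha$ and $\gamma_i$, since $\partial_\gamma C^{1-\gamma}/(1-\gamma)$ involves $\log C$, and handling the normal distribution's negative-consumption tail. I would handle both by working with a return distribution with support bounded so that $C_h>0$ (or a truncated normal) and bounding the integrands uniformly on compact sets of $(\alpha,\gamma)$. The remaining step is to pin down precisely that "monotonic function of risk aversion" for $M_i$ means the sign of the partial derivative in $\kappa$ evaluated along the optimal path.
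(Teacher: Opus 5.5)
Your proposal is correct and is essentially the paper's proof: apply the implicit function theorem to the first-order condition $\lambda_1M_1+\lambda_2M_2=0$, show that the denominator $\lambda_1\partial_\alpha M_1+\lambda_2\partial_\alpha M_2$ is negative, and conclude that $d\alpha^\star/d\kappa$ has the sign of $\lambda_2\,\partial M_2/\partial\kappa$, evaluated at $\alpha^\star$ with $\alpha$ held fixed; this is exactly the reading the paper uses later in Propositions \ref{prop_monotonicity_CARA} and \ref{prop_monotonicity_sum_CRRA}. One small repair: strict concavity alone does not give a strictly negative second derivative (consider $-\alpha^4$ at $0$), so justify the nonvanishing denominator directly, e.g.\ via $\partial^2_\alpha\mathbb{E}[U_i(C_h)]=W^2\,\mathbb{E}[\tilde{x}^2U_i''(C_h)]<0$, which is what the paper's explicit second-order computation for CARA amounts to.
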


\begin{proof}
    In the Appendix Section \ref{sec_AppProofs}.
\end{proof}

Therefore, the non-monotonicity result of Proposition \ref{prop_nonmonotonicity_sum} is related to the derivative of the individual expected utility at the optimal risky share with respect to the risky share. If this derivative is a non-monotonic function of individual risk aversion, then the household risky share will be a non-monotonic function of the risk aversion of this individual. In other words, collective solution \(\alpha^\star\) is decreasing in the risk aversion of the second member as long as \(\partial M_2(\alpha^\star)/\partial A_2<0 \) in the case of CARA and \(\partial M_2(\alpha^\star)/\partial \gamma_2<0\) in the case of CRRA.

To provide further intuition, I now return to the binary choice example provided in the introduction and highlight the parallels to the continuous choice set-up of this section. Let \(s\) be the risky alternative and \(f\) be the safe alternative. The household prefers the risky alternative whenever

\begin{equation}
\label{def_Delta_h}
\Delta_h^s \equiv \mathbb{E}[U_h(s)] - U_h(f)  = \lambda_1 \mathbb{E}[U_1(s)]+\lambda_2 \mathbb{E}[U_2(s)] - \lambda_1 U_1(f)- \lambda_2 U_2(f)>0. 
\end{equation}
Assume for a given level of risk aversion of the first member, there is a level of risk aversion of the second member, \(\bar{A_2}\) in the case of CARA and \(\bar{\gamma_2}\) in the case of CRRA, such that the household prefers the risky alternative for any risk aversion of the second member below this level and prefers the safe alternative for some risk aversion of the second member just above this level. Then, the analog of the monotonicity concept under the continuous choice set-up would be maintaining the preference for the safe alternative as the risk aversion of the second member increases above this threshold level. For an increasing risk aversion of the second member, the household will never switch back to preferring the risky alternative if,

\begin{equation}
\label{ineq_monotone_binary}
\frac{\partial\Delta_h^s}{\partial A_2} = \lambda_2 \left( \frac{\partial \mathbb{E}[U_2(s)]}{\partial A_2} - \frac{\partial  U_2(f)}{\partial A_2} \right) \equiv \lambda_2\frac{\partial \Delta_2^s}{\partial A_2}  < 0
\end{equation}
Note the equivalence between the \(\frac{\partial \Delta_2^s}{\partial A_2}\) above and the \(\frac{\partial M_2(\alpha^\star)}{\partial A_2}\) discussed above. Figure \ref{fig_nonmonotonicity_sum_binary} plots \(\Delta_h^s\) for an example in which inequality in (\ref{ineq_monotone_binary}) is violated, resulting in non-monotonicity.

\begin{figure}[ht]
\centering
\begin{subfigure}{0.45\textwidth}
    \includegraphics[width=1\linewidth]{"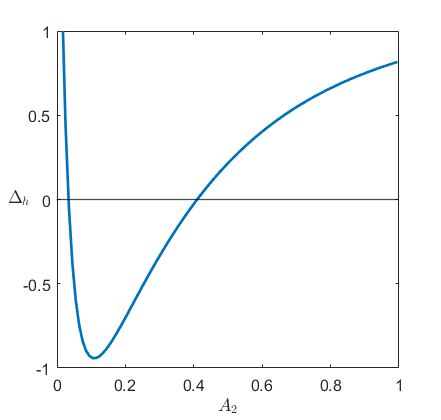"}
    \caption{CARA}
\end{subfigure}
\begin{subfigure}{0.45\textwidth}
    \includegraphics[width=1\linewidth]{"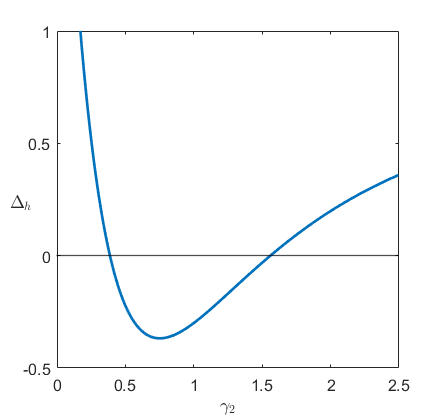"}
    \caption{CRRA}
\end{subfigure}
    \caption{\footnotesize 
    This figure plots the \(\Delta_h^s\) which represents the relative preference towards the risky alternative against the risk aversion of the second member of the household. \(\Delta_h^s\) is defined precisely in equation (\ref{def_Delta_h}). Assume the risky alternative \(s\) provides \$1 with probability 0.9 and \$60 with probability 0.1. Assume the safe alternative \(f\) provides a certain \$5. This example is taken from \cite{apesteguia2018monotone}. In Panel A set \(A_1=0.005\) and in Panel B set \(\gamma_1=0.08\).    }
    \label{fig_nonmonotonicity_sum_binary}
\end{figure}
This demonstrates that non-monotonicity arises because increasing the risk aversion of a member may make them dislike the risky alternative less even though they still prefer the safe alternative. That is, the individual utility difference between the risky and safe gambles may be a non-monotonic function of the risk aversion of this individual. An analogous dynamic takes place in the continuous distribution of risk and continuous choice set-up: as agent 2 becomes more risk averse, she prefers a lower risky share but she also becomes more indifferent to similar risky shares, making the household objective more similar to the objective of agent 1. 

The patterns in the binary choice example are directly related to the findings of \cite{apesteguia2018monotone}. The authors study the random utility model (RUM) under the unitary framework and show that the RUM can be a non-monotonic function of the risk aversion. However, in part 1 of Proposition 1, the authors show that monotonicity of RUM is directly related to the monotonicity of \(\mathbb{E}[U_i(s)] - U_i(f)\) which, as I have argued above, is in turn directly related to the monotonicity of the household-level decision\footnote{In the language \cite{apesteguia2018monotone}, \(s\) and \(f\) are not \(\Omega\)-ordered in my model.}.

An observation that can be made from Figure \ref{fig_nonmonotonicity_sum}, is that the optimal risky share of the household is always between the two individually optimal counterfactual risky shares. The following Lemma shows that this is indeed always true under CARA preferences.
\begin{lemma}
\label{lemma_inbetween}
     Let \(\alpha^\star_i=\frac{\mu}{A_iW\sigma^2}\) and assume \(\lambda_i>0\) for \(i \in \{1,2\}\). Then, under CARA utility, \(\alpha^\star= \alpha_1^\star=\alpha_2^\star\) when \(\alpha_1^\star=\alpha_2^\star\), and \(\alpha_i^\star<\alpha^\star< \alpha_j^\star\) when \(\alpha_i^\star<\alpha_j^\star\).
\end{lemma}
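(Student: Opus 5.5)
We compute each member's expected utility in closed form and then read off the household first-order condition. Under CARA utility with normal returns,
\[
\mathbb{E}[U_i(C_h)] = -\frac{1}{A_i}\exp\!\Big(-A_iW(1+r_f) - A_iW\alpha\mu + \tfrac12 A_i^2W^2\alpha^2\sigma^2\Big) \equiv -\frac{1}{A_i}e^{g_i(\alpha)},
\]
where \(g_i\) is a convex quadratic in \(\alpha\). Differentiating gives
\[
M_i(\alpha)= -\frac{1}{A_i}g_i'(\alpha)e^{g_i(\alpha)} = W\big(\mu - A_iW\sigma^2\alpha\big)e^{g_i(\alpha)} = A_iW^2\sigma^2\,(\alpha_i^\star-\alpha)\,e^{g_i(\alpha)},
\]
using \(\alpha_i^\star=\mu/(A_iW\sigma^2)\). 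Since \(A_iW^2\sigma^2e^{g_i(\alpha)}>0\), the sign of \(M_i(\alpha)\) equals the sign of \(\alpha_i^\star-\alpha\). Each \(\mathbb{E}[U_i]\) is therefore strictly increasing for \(\alpha<\alpha_i^\star\) and strictly decreasing for \(\alpha>\alpha_i^\star\).

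Next we establish that the household optimum exists and is interior to \([\alpha_i^\star,\alpha_j^\star]\). The household objective \(\lambda_1\mathbb{E}[U_1]+\lambda_2\mathbb{E}[U_2]\) is continuous in \(\alpha\) and tends to \(-\infty\) as \(|\alpha|\to\infty\), because each \(g_i\) is a convex quadratic with positive leading coefficient. A maximizer \(\alpha^\star\) therefore exists and satisfies
\[
\lambda_1 M_1(\alpha^\star)+\lambda_2M_2(\alpha^\star)=0.
\]
Suppose \(\alpha_i^\star<\alpha_j^\star\).
\begin{itemize}
\item For \(\alpha\le\alpha_i^\star\), we have \(M_i\ge0\) and \(M_j>0\). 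Since \(\lambda_i,\lambda_j>0\), the weighted sum is strictly positive, so the FOC fails.
\item For \(\alpha\ge\alpha_j^\star\), we have \(M_i<0\) and \(M_j\le 0\), so the weighted sum is strictly negative and the FOC again fails.
\end{itemize}
Hence every critical point, and in particular \(\alpha^\star\), lies strictly between \(\alpha_i^\star\) and \(\alpha_j^\star\).

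In the equal case \(\alpha_1^\star=\alpha_2^\star\equiv\bar\alpha\), both derivatives \(M_1\) and \(M_2\) are positive below \(\bar\alpha\) and negative above it. The household derivative then has the same sign pattern, so \(\bar\alpha\) is the unique maximizer.

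This sign argument also delivers uniqueness, with no concavity check needed. It rules out critical points outside the open interval in the unequal case and outside \(\{\bar\alpha\}\) in the equal case. If uniqueness inside the interval is also wanted, we can note that each \(\mathbb{E}[U_i]\) is concave in \(\alpha\) wherever \(g_i'^2+g_i''>0\), which holds everywhere here. Indeed,
\[
\frac{d^2}{d\alpha^2}\Big(-\tfrac{1}{A_i}e^{g_i}\Big) = -\tfrac{1}{A_i}\big(g_i''+g_i'^2\big)e^{g_i}<0,
\]
since \(g_i''=A_i^2W^2\sigma^2>0\). The household objective is therefore strictly concave and \(\alpha^\star\) is unique. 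The main point requiring care is the strictness of the inequalities at the endpoints, and it comes precisely from \(\lambda_i>0\) together with the strict sign of \(M_j\) away from \(\alpha_j^\star\).
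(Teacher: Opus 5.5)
Your proof is correct and takes essentially the paper's route: because the exponential factors are positive, the two weighted terms $\lambda_i M_i(\alpha^\star)$ in the first-order condition must either both vanish or have opposite signs, which places $\alpha^\star$ at or strictly between the individual optima. You also make explicit the existence of a maximizer (coercivity) and its uniqueness (strict concavity of each $\mathbb{E}[U_i]$), both of which the paper leaves implicit; note that in the unequal case uniqueness comes from the concavity computation, not from the sign argument alone as your earlier sentence suggests.
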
 
\begin{proof}
    In the Appendix Section \ref{sec_AppProofs}.
\end{proof}

Note that when \(\lambda_i=0\), the household problem reduces to the trivial individual problem and  \(\alpha^\star=\alpha^\star_j\). The following proposition specifies conditions under which monotonicity can be guaranteed under CARA preferences.

\begin{figure}[th]
\centering
    \includegraphics[width=0.5\linewidth]{"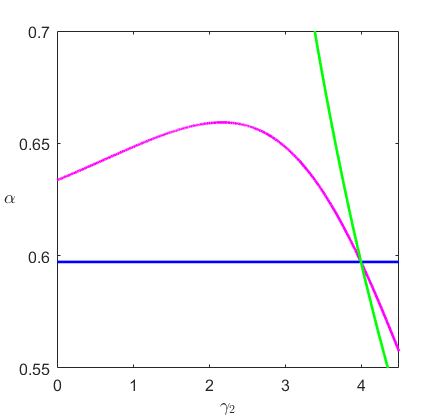"}
    \caption{\footnotesize 
    This figure plots the optimal risky share \(\alpha\), against the risk aversion of the second member of the household for CRRA individual utility functions. The blue curve plots \(\alpha^\star_1\) and the green curve plots \(\alpha^\star_2\) which are the solutions to the single's problem and are defined in equation (\ref{obj_unitary}). The magenta curve depicts the household-level optimal risky share \(\alpha^\star\), which solves the objective in (\ref{obj_sum_public}). Set, \(\mu= 0.1, \ \sigma=0.2, \ r_f=0, \ W=0.5, \ \lambda_1=\lambda_2=0.5\) and \(\gamma_1=4\). There are 10,000 simulations of a normally distributed shock and consumption is always positive in practice. 
    }
    \label{fig_nonmonotonicitybelow_sum_CRRA}
\end{figure}

\begin{proposition}
\label{prop_monotonicity_CARA}
     Solution \(\alpha^\star\) to the household problem in (\ref{obj_sum_public}) is always a decreasing function of \(A_i\) under CARA utility functions when \(A_i \leq A_j\) and \(\mu>0\).
\end{proposition}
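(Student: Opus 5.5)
The plan is to apply the implicit function theorem to the first-order condition of (\ref{obj_sum_public}), as in the proof of Lemma \ref{lemma_IFT}. The sign of \(d\alpha^\star/dA_i\) then equals the sign of \(\partial M_i(\alpha^\star)/\partial A_i\), and I will show this derivative is strictly negative using the bounds on \(\alpha^\star\) from Lemma \ref{lemma_inbetween}.

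\textbf{Step 1: closed form of the objective.} Under normal returns, the CARA expected utility has the closed form
\begin{equation*}
\mathbb{E}[U_i]=-\tfrac{1}{A_i}e^{g_i(\alpha)},\qquad g_i(\alpha)=-A_iW(1+r_f)-A_iW\alpha\mu+\tfrac12 A_i^2\alpha^2W^2\sigma^2 .
\end{equation*}
Differentiating in \(\alpha\) gives
\begin{equation*}
M_i(\alpha)=W e^{g_i(\alpha)}\bigl(\mu-A_i\alpha W\sigma^2\bigr).
\end{equation*}
Since \(g_i\) is a convex quadratic, \(e^{g_i}\) is convex, so each \(\mathbb{E}[U_i]\) is strictly concave in \(\alpha\). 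Hence the household objective is strictly concave and \(\alpha^\star\) is the unique solution of \(\lambda_1M_1(\alpha^\star)+\lambda_2M_2(\alpha^\star)=0\).

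\textbf{Step 2: implicit differentiation.} The implicit function theorem gives
\begin{equation*}
\frac{d\alpha^\star}{dA_i}=-\frac{\lambda_i\,\partial M_i/\partial A_i}{\sum_k\lambda_k\,\partial M_k/\partial\alpha}.
\end{equation*}
The denominator is negative by strict concavity. So it suffices to prove \(\partial M_i(\alpha^\star)/\partial A_i<0\).

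\textbf{Step 3: differentiate \(M_i\) in \(A_i\) at fixed \(\alpha\).} This yields
\begin{equation*}
\frac{\partial M_i}{\partial A_i}=We^{g_i}\Bigl[\frac{\partial g_i}{\partial A_i}\bigl(\mu-A_i\alpha W\sigma^2\bigr)-\alpha W\sigma^2\Bigr],
\end{equation*}
where
\begin{equation*}
\frac{\partial g_i}{\partial A_i}=-W(1+r_f)-\alpha W\bigl(\mu-A_i\alpha W\sigma^2\bigr).
\end{equation*}

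\textbf{Step 4: sign the terms using the hypotheses.} This is the step that carries the argument.
\begin{itemize}
\item Since \(A_i\le A_j\), we have \(\alpha_i^\star=\mu/(A_iW\sigma^2)\ge\alpha_j^\star\).
\item Lemma \ref{lemma_inbetween} then gives \(\alpha_j^\star\le\alpha^\star\le\alpha_i^\star\), with equality when \(A_i=A_j\).
\item Since \(\mu>0\), both individual shares are positive, so \(\alpha^\star>0\).
\item The bound \(\alpha^\star\le\alpha_i^\star\) is exactly \(\mu-A_i\alpha^\star W\sigma^2\ge0\).
\end{itemize}
It follows that \(\partial g_i/\partial A_i<0\). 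This uses \(1+r_f>0\), which holds because \(\theta\) is restricted to positive values. Therefore the first term in the bracket of Step 3 is nonpositive. The second term, \(-\alpha^\star W\sigma^2\), is strictly negative. Hence \(\partial M_i(\alpha^\star)/\partial A_i<0\), and so \(d\alpha^\star/dA_i<0\) wherever \(A_i\le A_j\).

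\textbf{Main obstacle.} The delicate point is the sign of \(\mu-A_i\alpha^\star W\sigma^2\). This factor is exactly what can turn positive-times-negative into a positive contribution when \(A_i>A_j\). It is here that both the hypothesis \(A_i\le A_j\) and Lemma \ref{lemma_inbetween} are indispensable.
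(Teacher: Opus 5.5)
Your proposal is correct and follows essentially the same route as the paper. Both arguments use the implicit function theorem (Lemma \ref{lemma_IFT}) to reduce the claim to the sign of $\partial M_i(\alpha^\star)/\partial A_i$, compute that derivative from the closed-form CARA expected utility, and use Lemma \ref{lemma_inbetween} to get $0<\alpha^\star\le\alpha_i^\star$, which makes it strictly negative. The only difference is presentation: the paper divides out the positive factors and rearranges the same condition as $[\alpha^\star-\alpha_i^\star(1+\tfrac{1+r_f}{\alpha^\star\mu})](\alpha^\star-\alpha_i^\star)+\tfrac{1}{A_i^2W^2\sigma^2}>0$, whereas you sign the two terms of the bracket directly.
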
 

\begin{proof}
    In the Appendix Section \ref{sec_AppProofs}.
\end{proof}

So, under CARA preferences, the household risky share always decreases with the risk aversion of the less risk averse member. In Figure \ref{fig_nonmonotonicity_sum}, this corresponds to the region to the left of the point where all three curves intersect. There is, however, no equivalent result under CRRA individual utilities: Figure \ref{fig_nonmonotonicitybelow_sum_CRRA} plots an example where household risky share increases with the risk aversion of the less risk averse member.

\begin{figure}[htb]
\centering
\begin{subfigure}{0.45\textwidth}
    \includegraphics[width=1\linewidth]{"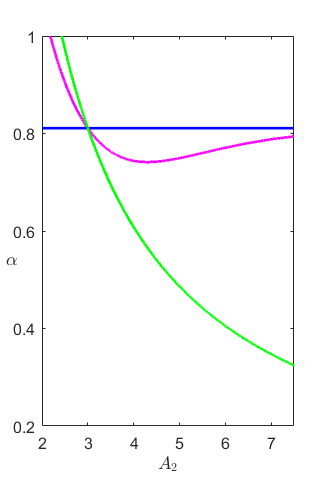"}
    \caption{\(W=1\)}
\end{subfigure}
\begin{subfigure}{0.45\textwidth}
    \includegraphics[width=1\linewidth]{"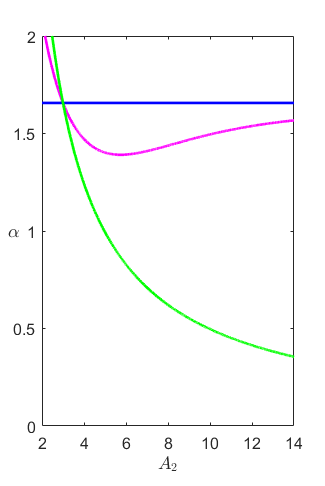"}
    \caption{\(W=0.5\)}
\end{subfigure}
    \caption{\footnotesize 
    This figure plots the optimal risky share \(\alpha\), against the risk aversion of the second member of the household for CARA individual utility functions. Blue curves plot \(\alpha^\star_1\) and green curves plot \(\alpha^\star_2\) which are the solutions to the single's problem and are defined in equation (\ref{obj_unitary}). The magenta curves depict the household-level optimal risky share \(\alpha^\star\), which solves the objective in (\ref{obj_sum_public}). Set, \(\mu= 0.1, \ \sigma=0.2, \ r_f=0, \ \lambda_1=\lambda_2=0.5\) and \(A_1=3\). In Panel A set \(W=1\) and in Panel B set \(W=0.5\).  There are 10,000 simulations of a normally distributed shock and consumption is always positive in practice. 
    }
    \label{fig_nonmonotonicitywealth1_CARA_sum}
\end{figure}

Finally, one might notice the somewhat arbitrary choice of the wealth level \(W=2\) in the example given in Figure \ref{fig_nonmonotonicity_sum}. This is because under \(W=1\), one always obtains monotonic household risky share under CRRA preferences. The following proposition shows formally that this is always true.

\begin{proposition}
\label{prop_monotonicity_sum_CRRA}
    Solution \(\alpha^\star\) to the household problem in (\ref{obj_sum_public}) is a strictly decreasing function of the risk aversion of member \(i\) under CRRA utility function when \(r_f=0\), \(W=1\) and \(\mu>0\).
\end{proposition}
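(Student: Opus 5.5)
The plan is to work directly with the household first-order condition and apply the implicit function theorem, in the spirit of Lemma \ref{lemma_IFT}. With \(r_f=0\) and \(W=1\), household consumption is \(C_h = 1+\alpha\tilde{x}\). For CRRA utility the marginal effect of \(\alpha\) on member \(i\)'s expected utility is
\begin{equation*}
M_i(\alpha)=\mathbb{E}\left[C_h^{-\gamma_i}\tilde{x}\right].
\end{equation*}
The factor \(1/(1-\gamma_i)\) cancels after differentiation, so the same expression holds for every \(\gamma_i>0\), including \(\gamma_i=1\) (log utility). The household optimum then solves
\begin{equation*}
F(\alpha,\gamma_i)\equiv\lambda_1 M_1(\alpha)+\lambda_2 M_2(\alpha)=0 .
\end{equation*}
Throughout, as in the figures, I take consumption to be positive on the relevant support. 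This keeps \(C_h^{-\gamma_i}\) and \(\ln C_h\) well defined; the paper's simulations note it always holds "in practice," though with an exactly normal \(\tilde{x}\) it is an approximation. I also assume enough integrability to differentiate under the expectation.

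First, I show \(\alpha^\star>0\). The household objective is strictly concave in \(\alpha\), since
\begin{equation*}
\partial_\alpha M_i=-\gamma_i\,\mathbb{E}\left[C_h^{-\gamma_i-1}\tilde{x}^2\right]<0 .
\end{equation*}
At \(\alpha=0\) we have \(F(0,\gamma_i)=\mu>0\). Hence the unique maximizer satisfies \(\alpha^\star>0\), and \(\partial_\alpha F<0\) there.

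Second, I differentiate \(M_i\) in \(\gamma_i\) at \(\alpha^\star\):
\begin{equation*}
\frac{\partial M_i(\alpha^\star)}{\partial\gamma_i}=-\mathbb{E}\left[C_h^{-\gamma_i}\,\tilde{x}\,\ln C_h\right].
\end{equation*}
This is where \(W=1\) and \(r_f=0\) do their work. Because \(C_h=1+\alpha^\star\tilde{x}\) with \(\alpha^\star>0\), \(\ln C_h\) has exactly the sign of \(\tilde{x}\). So \(\tilde{x}\ln C_h\ge 0\), with equality only on the null event \(\tilde{x}=0\). Since \(C_h^{-\gamma_i}>0\), the expectation is strictly positive and \(\partial M_i/\partial\gamma_i<0\).

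For general \(W\) or \(r_f\), the threshold where \(\ln C_h\) changes sign would be \(\tilde{x}=(1-W(1+r_f))/(\alpha W)\) rather than \(0\). The integrand could then take both signs, which is consistent with the non-monotonicity in Figure \ref{fig_nonmonotonicity_sum}.

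Third, I apply the implicit function theorem:
\begin{equation*}
\frac{d\alpha^\star}{d\gamma_i}=-\frac{\lambda_i\,\partial M_i/\partial\gamma_i}{\partial_\alpha F}.
\end{equation*}
The numerator \(\lambda_i\,\partial M_i/\partial\gamma_i\) is negative, and the denominator \(\partial_\alpha F\) is negative. With the leading minus sign, \(d\alpha^\star/d\gamma_i<0\), which gives strict monotonicity. Equivalently, this is Lemma \ref{lemma_IFT} together with the sign information.

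The main obstacle I expect is technical rather than conceptual. The support issue for the normal distribution has to be handled, either by restricting to a distribution with \(C_h>0\) or by truncating. I also need to justify differentiation under the expectation via dominated convergence, using the moments of \(C_h^{-\gamma_i}\ln C_h\,\tilde{x}\). The sign argument itself is immediate once \(\alpha^\star>0\) is established.
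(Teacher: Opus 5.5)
Your proposal is correct and follows the paper's proof: Lemma \ref{lemma_IFT}'s implicit-function argument, together with the sign of \(\partial M_i/\partial\gamma_i=-\mathbb{E}\left[\tilde{x}\ln(1+\alpha\tilde{x})(1+\alpha\tilde{x})^{-\gamma_i}\right]\). The paper obtains this sign via the substitution \(\tilde{y}=1+\alpha\tilde{x}\) and the positivity of \((\tilde{y}-1)\ln\tilde{y}\). Your explicit derivation of \(\alpha^\star>0\) from \(\mu>0\) and strict concavity is a useful addition, because the paper's factor \(-1/\alpha\) silently requires it and the paper never uses \(\mu>0\) explicitly.
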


\begin{proof}
    In the Appendix Section \ref{sec_AppProofs}.
\end{proof}

The intuition for Proposition \ref{prop_monotonicity_sum_CRRA} can be obtained by noting that the marginal CRRA individual utility of consumption at \(C_i=1\) is equal to 1 and is independent of the risk aversion parameter. Additional discussion in Appendix \ref{sec_AppWealthSplit} elaborates on this point based on a wealth splitting example. Under a CARA individual utility function however, there is no level of consumption that would make the marginal utility independent of the risk aversion. Figure \ref{fig_nonmonotonicitywealth1_CARA_sum} demonstrates that even for lower levels of wealth, one can find risk aversion levels such that non-monotonicity is obtained under CARA preferences.

\subsubsection{Expected product of utilities}
\label{subsec_expprod}

Assume the household maximizes the expected product of utilities\footnote{Due to the fact that utilities might be negative, in practice, we multiply them by -1 and minimize the objective instead of maximizing it.}:
\begin{equation}
    \max_{\alpha} \; \mathbb{E}[U_{h}(C_h)] = \max_{\alpha} \; \mathbb{E} \left[ \prod_{i} U_i(C_h)^{\lambda_i} \right]
    \label{obj_expprod}
\end{equation}

\begin{proposition}
\label{prop_expprod}
      Solution \(\alpha^\star\) to the household problem in (\ref{obj_expprod}) is always a monotonic function of the risk aversion of a member under both CARA and CRRA individual utilities.
\end{proposition}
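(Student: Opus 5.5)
The plan is to show that, under the product aggregation in (\ref{obj_expprod}), the household objective collapses to a \emph{unitary} objective. The household then behaves like a single agent with the same functional form of utility and an aggregate risk-aversion parameter that is a positively weighted average of the members' parameters. Monotonicity then follows from the standard unitary comparative static combined with the chain rule. Throughout, I use the convention of the footnote: utilities are replaced by their absolute values, with the direction of optimization chosen accordingly. Strictly positive multiplicative constants do not affect the $\arg\max$ and can be dropped.

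\emph{CARA.} Since $-U^{cara}_i(C)=\exp(-A_iC)/A_i>0$,
\begin{equation*}
\prod_i \left(-U^{cara}_i(C_h)\right)^{\lambda_i} = \frac{\exp\left(-(\lambda_1A_1+\lambda_2A_2)C_h\right)}{A_1^{\lambda_1}A_2^{\lambda_2}} .
\end{equation*}
Minimizing the expectation of this expression is therefore equivalent to maximizing the expected CARA utility of a single agent with $A_h\equiv\lambda_1A_1+\lambda_2A_2$. With $C_h\sim\mathcal N(W(1+r_f+\alpha\mu),\alpha^2W^2\sigma^2)$, the usual moment-generating-function computation gives the closed form $\alpha^\star=\mu/(A_hW\sigma^2)$, exactly as in the formula used in Lemma \ref{lemma_inbetween}. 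Since $\partial A_h/\partial A_i=\lambda_i>0$, $\alpha^\star$ is strictly decreasing in $A_i$ when $\mu>0$ (strictly increasing when $\mu<0$, and constant when $\mu=0$). In every case it is monotonic.

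\emph{CRRA.} We have $|U^{crra}_i(C)|=C^{1-\gamma_i}/|1-\gamma_i|$, so
\begin{equation*}
\prod_i |U^{crra}_i(C_h)|^{\lambda_i}=\frac{C_h^{\,1-\gamma_h}}{\prod_i|1-\gamma_i|^{\lambda_i}},\qquad \gamma_h\equiv\lambda_1\gamma_1+\lambda_2\gamma_2 ,
\end{equation*}
using $\lambda_1+\lambda_2=1$. Up to a positive constant, the household objective is therefore the CRRA objective with relative risk aversion $\gamma_h$, maximized when $\gamma_h<1$ and minimized (after the sign flip) when $\gamma_h>1$. In both cases this coincides with maximizing $\mathbb E[C_h^{1-\gamma_h}/(1-\gamma_h)]$. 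The case $\gamma_h=1$ is handled as the log limit. Hence $\alpha^\star=\alpha^\star(\gamma_h)$ is the unitary optimum for the parameter $\gamma_h$. The remaining step is to show that the unitary CRRA risky share is monotone (decreasing for $\mu>0$) in $\gamma$. I would invoke the Arrow--Pratt comparative static: if $u_b=\phi\circ u_a$ with $\phi$ increasing and concave, the optimal investment in a single risky asset is weakly lower under $u_b$. For $\gamma'>\gamma$, $C^{1-\gamma'}$ is an increasing concave transform of $C^{1-\gamma}$ on $C>0$. Combined with $\partial\gamma_h/\partial\gamma_i=\lambda_i>0$, this gives monotonicity in $\gamma_i$.

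The main obstacle is making the CRRA step rigorous. First, with normal returns consumption can be negative, so the CRRA objective is not well defined. I would follow the paper's practice and restrict attention to return distributions (or ranges of $\alpha$) with $C_h>0$ almost surely, or equivalently treat a truncated normal. Second, the sign convention must be applied consistently when $\gamma_1<1<\gamma_2$, since then the individual utilities have opposite signs. Taking absolute values before forming the product settles this, because the resulting objective depends on the $\gamma_i$ only through $\gamma_h$. Third, the Arrow--Pratt argument yields weak monotonicity; strictness requires an interior optimum and a strict first-order-condition crossing argument, which I would add if needed.
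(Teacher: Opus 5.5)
Your proposal follows the paper's proof. The product aggregation collapses to a unitary CARA or CRRA problem with $A_h=\sum_i\lambda_iA_i$ or $\gamma_h=\sum_i\lambda_i\gamma_i$, monotonicity then follows from the unitary comparative static because $\lambda_i>0$, and your absolute-value convention, Arrow--Pratt step and positivity caveat supply rigor the paper omits. Two small fixes: state the Pratt transform for the signed utilities $C^{1-\gamma}/(1-\gamma)$, since for $\gamma'>\gamma>1$ the map $C^{1-\gamma}\mapsto C^{1-\gamma'}$ is convex rather than concave; and note that at $\gamma_h=1$ the product objective is constant in $\alpha$ (not logarithmic), so assigning the log optimum there is a continuity selection rather than something the objective delivers.
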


\begin{proof}
    In the Appendix Section \ref{sec_AppProofs}.
\end{proof}

Multiplying individual CARA or CRRA utilities yields a household objective equivalent to the respective individual ones. This not only allows one to derive analytic solutions, but also guarantees monotonicity. Under CARA preferences, one obtains:

\begin{equation}
\label{sol_expprod_CARA}
    \alpha^\star = \frac{\mu}{\sum_i\lambda_i A_i W \sigma^2}
\end{equation}
Under CRRA preferences and log-normal risky asset return, one obtains:
\begin{equation}
    \alpha^\star = \frac{\mu+\sigma^2/2}{\sum_i\lambda_i \gamma_i\sigma^2}
\end{equation} 

Observe that under both CARA and CRRA, optimal household risky share is a weighted harmonic average of individually optimal risky shares (i.e. \(1/\alpha^\star=\lambda_1/\alpha_1^\star+\lambda_2/\alpha_2^\star\)). Figure \ref{fig_expprod} visualizes the solution to the objective in (\ref{obj_expprod}) along with the solutions to the corresponding individual problems.

\begin{figure}[th]
\centering
\begin{subfigure}{0.45\textwidth}
    \includegraphics[width=1\linewidth]{"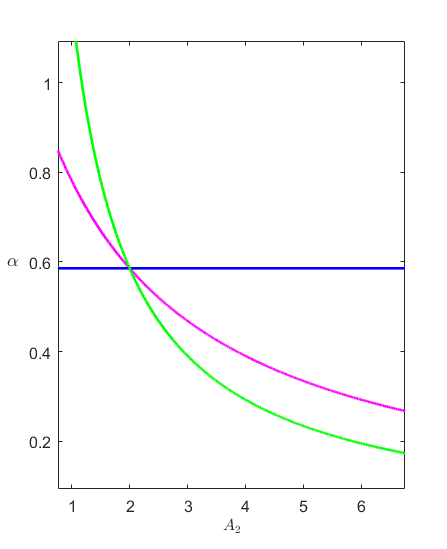"}
    \caption{CARA}
\end{subfigure}
\begin{subfigure}{0.45\textwidth}
    \includegraphics[width=1\linewidth]{"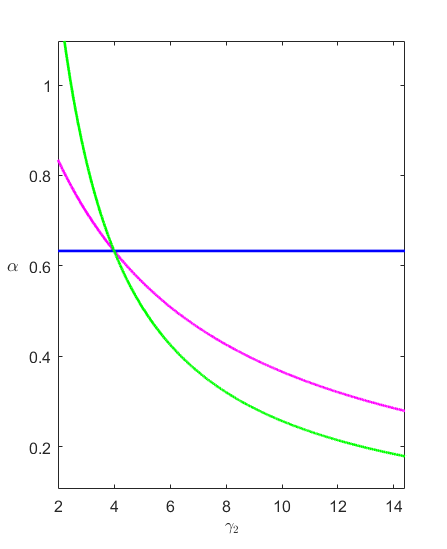"}
    \caption{CRRA}
\end{subfigure}
    \caption{\footnotesize 
    This figure plots the optimal risky share \(\alpha\), against the risk aversion of the second member of the household for CARA (Panel A) and CRRA (Panel B) individual utility functions. Blue curves plot \(\alpha^\star_1\) and green curves plot \(\alpha^\star_2\) which are the solutions to the single's problem and are defined in equation (\ref{obj_unitary}). The magenta curves depict the household-level optimal risky share \(\alpha^\star\), which solves the objective in (\ref{obj_expprod}). Set, \(\mu= 0.1, \ \sigma=0.2, \ r_f=0, \ W=2,\) and \(\lambda_1=\lambda_2=0.5\). In panel A set \(A_1=2\) and in Panel B set \(\gamma_1=4\).  There are 10,000 simulations of a normally distributed shock and consumption is always positive in practice. 
    }
    \label{fig_expprod}
\end{figure}

Note that even under equal bargaining weights, the household risky share is always closer to the smaller individual risky share. This will be the case under other parameters of the model as it stems directly from the properties of the harmonic average. I return to this solution in Section \ref{section_Betas} when discussing identification of the bargaining weights and sensitivity of the solution to the wealth level.

\subsubsection{Other approaches}
\label{section_otherapp}
In this subsection, I focus mostly on CARA individual utilities. Assume the household maximizes a weighted product of expected CARA utilities:
\begin{equation}
    \max_{\alpha} \; \mathbb{E} [U_{h}(C_h)] = \max_{\alpha} \; \prod_{i} \mathbb{E}[U^{cara}_i(C_h)]^{\lambda_i} 
    \label{obj_ProdExp_CARA}
\end{equation}

\begin{proposition}
\label{prop_prodexp}
       There exists \(\theta  \in \mathbb{R}_{+}^6 \) such that solution \(\alpha^\star\) to the household problem in (\ref{obj_ProdExp_CARA}) is a non-monotonic function of the risk aversion of a member.
\end{proposition}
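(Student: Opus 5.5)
The plan is to exploit the fact that, under normality, each individual expected CARA utility has a closed form. This makes the objective in (\ref{obj_ProdExp_CARA}) analytically tractable, so I would write down \(\alpha^\star\) explicitly and then show directly that it is non-monotone in \(A_2\) for a concrete \(\theta\).

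\textbf{Closed form for each member.} Since \(C_h \sim \mathcal{N}(W(1+r_f+\alpha\mu), \alpha^2W^2\sigma^2)\), the moment generating function of the normal gives
\begin{equation*}
\mathbb{E}[U^{cara}_i(C_h)] = -\frac{1}{A_i}\exp\left(-A_iW(1+r_f) - A_i\alpha W\mu + \tfrac{1}{2}A_i^2\alpha^2W^2\sigma^2\right) < 0 .
\end{equation*}

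\textbf{Sign convention.} Each factor is negative, so raising it to the power \(\lambda_i\) is not well defined as it stands. I would therefore follow the convention in the footnote of Section \ref{subsec_expprod}: multiply each expected utility by \(-1\) and minimize \(\prod_i(-\mathbb{E}[U_i])^{\lambda_i}\). I expect this interpretive step to be the main subtlety. It should be stated explicitly as the meaning of (\ref{obj_ProdExp_CARA}), because it is the natural analogue of maximizing a product of utilities.

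\textbf{Reduction to a quadratic.} The log is strictly increasing, so I would minimize the logarithm of the objective instead. Up to terms that do not depend on \(\alpha\), this logarithm is
\begin{equation*}
\sum_i \lambda_i\left(-A_i\alpha W\mu + \tfrac{1}{2}A_i^2\alpha^2W^2\sigma^2\right).
\end{equation*}
This is a strictly convex quadratic in \(\alpha\). Its first-order condition yields the unique solution
\begin{equation*}
\alpha^\star = \frac{\mu}{W\sigma^2}\,\frac{\lambda_1A_1+\lambda_2A_2}{\lambda_1A_1^2+\lambda_2A_2^2}.
\end{equation*}

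\textbf{Sign of the derivative.} Differentiating with respect to \(A_2\), the numerator of \(\partial\alpha^\star/\partial A_2\) is proportional (with a positive factor \(\mu\lambda_2/(W\sigma^2)\) when \(\mu>0\)) to
\begin{equation*}
\lambda_1A_1^2 - 2\lambda_1A_1A_2 - \lambda_2A_2^2 .
\end{equation*}
This is a downward-opening quadratic in \(A_2\). It is positive at \(A_2\to 0\) and negative for large \(A_2\), with a single positive root \(\bar A_2\). Hence \(\alpha^\star\) strictly increases in \(A_2\) on \((0,\bar A_2)\) and strictly decreases afterwards.

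\textbf{Concrete example.} To exhibit an explicit \(\theta\), I would take the Figure \ref{fig_nonmonotonicity_sum} values: \(\mu=0.1\), \(\sigma=0.2\), \(r_f=0\), \(W=2\), \(\lambda_1=\lambda_2=0.5\), \(A_1=2\). The root is then \(\bar A_2 = A_1(\sqrt2-1)\approx 0.83\). For instance, \(\alpha^\star\) rises between \(A_2=0.2\) and \(A_2=0.6\) and falls between \(A_2=1\) and \(A_2=3\). By symmetry the same holds in \(A_1\), so the solution is non-monotone in the risk aversion of a member, which proves the proposition.
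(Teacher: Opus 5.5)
Your proposal is correct and follows essentially the same route as the paper. Both use closed-form expected CARA utilities under normality and a first-order condition yielding \(\alpha^\star = \frac{(\lambda_1A_1+\lambda_2A_2)\mu}{(\lambda_1A_1^2+\lambda_2A_2^2)W\sigma^2}\), and both observe that \(\alpha^\star\) decreases in \(A_2\) only when \(A_2 > A_1\bigl(\sqrt{\lambda_1(\lambda_1+\lambda_2)}-\lambda_1\bigr)/\lambda_2\). Your explicit sign convention (minimizing \(\prod_i(-\mathbb{E}[U_i])^{\lambda_i}\), which the paper's proof does only implicitly) and your concrete parameters with \(\bar A_2 = A_1(\sqrt{2}-1)\) make the existence claim slightly more complete than the paper's remark that the inequality ``does not hold for all'' parameters.
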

\begin{proof}
    In the Appendix Section \ref{sec_AppProofs}.
\end{proof}

Assuming \(\lambda_1+\lambda_2=1\), the condition for monotonicity with respect to \(A_2\) can be written as, \( A_2 > A_1 \frac{\sqrt{\lambda_1}(1 - \sqrt{\lambda_1})}{1-\lambda_1} \). Since \(\frac{\sqrt{\lambda_1}(1 - \sqrt{\lambda_1})}{1-\lambda_1}<0.5 \ \ \forall \lambda_1\), monotonicity is guaranteed when \(A_2>0.5A_1\). Note that this is in contrast to the result under sum of expected utilities where we can guarantee monotonicity with respect to \(A_2\) when \(A_2<A_1\).

Motivated by \cite{gu2024gender}, I also consider the Mean-Variance utility which is a monotonic transformation of the CARA utility:
\[ U^{mv}_i(C_i) = A_i\mathbb{E}[C_i]-\frac{1}{2}A_i^2 Var(C_i)\]
Assume household maximizes weighted sum of individual Mean-Variance utilities. That is the household problem is:

 \begin{equation}
      \max_{\alpha} \; \mathbb{E} [U_{h}(C_h)] =  \sum_i \lambda_i \mathbb{E} (U^{mv}_i)  = \sum_i\lambda_i A_iW( 1+r_f+\alpha \mu) - \frac{1}{2} \sum_i\lambda_i A_i^2 W^2\alpha^2 \sigma^2
 \label{obj_sum_MV}
 \end{equation}

\begin{proposition}
\label{prop_summv}
    Maximizing weighted sum of Mean-Variance utilities is equivalent to maximizing weighted product of expected CARA utilities.
\end{proposition}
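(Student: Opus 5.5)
The plan is to compute both objectives in closed form as functions of \(\alpha\) and show that one is a strictly monotone transformation of the other, up to an additive constant that does not depend on \(\alpha\). Then both problems have the same maximizer, and I would exhibit that maximizer explicitly.

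\textbf{Step 1: closed form for the expected CARA utility.} Since \(C_h \sim \mathcal{N}(W(1+r_f+\alpha\mu), \alpha^2W^2\sigma^2)\), the normal moment generating function gives
\[
\mathbb{E}[U^{cara}_i(C_h)] = -\frac{1}{A_i}\exp\!\Big(-A_iW(1+r_f+\alpha\mu) + \tfrac{1}{2}A_i^2W^2\alpha^2\sigma^2\Big) = -\frac{1}{A_i}\exp\!\big(-\mathbb{E}(U^{mv}_i)\big),
\]
where \(\mathbb{E}(U^{mv}_i)\) is exactly the \(i\)-th summand of (\ref{obj_sum_MV}).

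\textbf{Step 2: handle the sign.} Each expected CARA utility is negative, so the real power \(\mathbb{E}[U_i]^{\lambda_i}\) is not well defined as written. I would adopt the convention of the footnote in Section \ref{subsec_expprod}: multiply each utility by \(-1\) and minimize. Under that convention, (\ref{obj_ProdExp_CARA}) becomes minimizing
\[
\prod_i A_i^{-\lambda_i}\exp\!\big(-\lambda_i\,\mathbb{E}(U^{mv}_i)\big).
\]
I expect this sign and convention point to be the only delicate step. I would make explicit that the statement means equivalence of optimizers, i.e.\ the same \(\alpha^\star\), under this convention.

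\textbf{Step 3: take logs.} Since \(\log\) is strictly increasing, minimizing the expression above is equivalent to minimizing
\[
-\sum_i \lambda_i\log A_i - \sum_i\lambda_i\,\mathbb{E}(U^{mv}_i).
\]
The first term is constant in \(\alpha\). So the problem is equivalent to maximizing \(\sum_i\lambda_i\,\mathbb{E}(U^{mv}_i)\), which is (\ref{obj_sum_MV}).

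\textbf{Step 4: unique common solution.} The objective in (\ref{obj_sum_MV}) is a strictly concave quadratic in \(\alpha\) because \(\lambda_i, A_i, W, \sigma > 0\). Its first-order condition gives the unique maximizer
\[
\alpha^\star=\frac{\mu\sum_i\lambda_iA_i}{W\sigma^2\sum_i\lambda_iA_i^2}.
\]
By Step 3, this is also the unique solution of (\ref{obj_ProdExp_CARA}). Having this formula explicitly also supports the discussion of Proposition \ref{prop_prodexp}.
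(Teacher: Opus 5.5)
Your proof is correct, but it takes a somewhat different route from the paper's appendix proof.

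\textbf{The paper's route.} The appendix argument works at the level of solutions. It solves the first-order condition of (\ref{obj_sum_MV}) and notes that the resulting \(\alpha^\star\) coincides with (\ref{sol_ProdExp_CARA}). That expression was obtained in the proof of Proposition~\ref{prop_prodexp} from the first-order condition of the closed-form product.

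\textbf{Your route.} You work at the level of objectives. This makes precise the log-transform remark that the paper states only informally right after the proposition: the logarithm of the sign-adjusted product is a constant minus the objective in (\ref{obj_sum_MV}).

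\textbf{What your route buys.}
\begin{itemize}
\item A stronger equivalence: the two problems rank every \(\alpha\) identically, rather than merely sharing a critical point.
\item An explicit treatment of the sign issue. In the proof of Proposition~\ref{prop_prodexp}, the minus sign is dropped from the displayed product. Read literally, the critical point found there is the minimum of a positive, exponential-of-convex expression. Your explicit use of the ``multiply by \(-1\) and minimize'' convention is what justifies calling that point the household's optimum.
\item Existence and uniqueness of \(\alpha^\star\), from strict concavity of (\ref{obj_sum_MV}). Comparing first-order conditions does not certify this by itself.
\end{itemize}

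\textbf{What the paper's route buys.} It is shorter, because (\ref{sol_ProdExp_CARA}) was already available. It also delivers the explicit formula directly, and that formula is what the surrounding monotonicity discussion uses. Your Step 4 recovers the same formula anyway.

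One cosmetic point: the \(i\)-th summand of (\ref{obj_sum_MV}) is \(\lambda_i\mathbb{E}(U^{mv}_i)\), not \(\mathbb{E}(U^{mv}_i)\).
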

\begin{proof}
    In the Appendix Section \ref{sec_AppProofs}.
\end{proof} 

This equivalence can be observed by taking the logarithm of the objective function in (\ref{obj_ProdExp_CARA}) and comparing it to the objective in (\ref{obj_sum_MV}). Proposition \ref{prop_summv} demonstrates that while under normally distributed risky return maximizing Mean-Variance utility is equivalent to maximizing expected CARA utility under the unitary framework, this is not the case under the collective framework. Now, redefine the objective in (\ref{obj_sum_MV}) slightly such that the household maximizes weighted sum of individual simplified Mean-Variance utilities (as in Gu et al., 2024) where the risk aversion parameter has been canceled out:
\[ U^{smv}_i(C_i) = \mathbb{E}[C_i]-\frac{1}{2}A_i Var(C_i) \]
The household problem is:

 \begin{equation}
      \max_{\alpha} \; \mathbb{E} [U_{h}(C_h)] =  \sum_i \lambda_i \mathbb{E} (U^{smv}_i)  = \sum_i\lambda_iW ( 1+r_f+\alpha \mu) - \frac{1}{2} \sum_i\lambda_i A_i W^2\alpha^2 \sigma^2
 \label{obj_sum_sMV}
 \end{equation}

\begin{proposition}
\label{prop_sumsmv}
    Maximizing weighted sum of simplified Mean-Variance utilities is equivalent to maximizing expectation of a weighted product of CARA utilities.
\end{proposition}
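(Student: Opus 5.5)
The plan is to compute both household objectives in closed form as functions of \(\alpha\) and check that they are strictly increasing transformations of the same strictly concave quadratic in \(\alpha\). Then they have the same maximizer, which we identify as the expression in (\ref{sol_expprod_CARA}).

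\textbf{The simplified Mean-Variance objective.} The objective in (\ref{obj_sum_sMV}) is
\[
W(1+r_f)+\alpha W\mu-\tfrac{1}{2}\bar{A}W^2\sigma^2\alpha^2,
\]
where \(\bar{A}\equiv\sum_i\lambda_iA_i\) and we use \(\lambda_1+\lambda_2=1\). For \(W,\sigma>0\) this is strictly concave in \(\alpha\). Its first-order condition gives \(\alpha^\star=\mu/(\bar{A}W\sigma^2)\).

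\textbf{The expected-product objective.} The product of CARA utilities is
\[
\prod_iU_i^{\lambda_i}=\prod_i\left(\frac{-e^{-A_iC_h}}{A_i}\right)^{\lambda_i}.
\]
Because utilities are negative, we follow the footnote convention for (\ref{obj_expprod}) and work with \(\prod_i(-U_i)^{\lambda_i}\), which the household minimizes. This gives
\[
\prod_i(-U_i)^{\lambda_i}=K\,e^{-\bar{A}C_h},\qquad K\equiv\prod_iA_i^{-\lambda_i}>0,
\]
where \(K\) does not depend on \(\alpha\). So the household behaves like a single CARA agent with absolute risk aversion \(\bar{A}\).

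Since \(C_h\) is normal, the normal moment generating function gives
\[
\mathbb{E}\left[e^{-\bar{A}C_h}\right]=\exp\!\left(-\bar{A}W(1+r_f+\alpha\mu)+\tfrac{1}{2}\bar{A}^2W^2\alpha^2\sigma^2\right).
\]
Minimizing this is equivalent to maximizing
\[
\bar{A}\left[W(1+r_f)+\alpha W\mu-\tfrac{1}{2}\bar{A}W^2\sigma^2\alpha^2\right].
\]
This is \(\bar{A}>0\) times the simplified Mean-Variance objective. It also equals the sum-of-MV form \(\sum_i\lambda_iA_iW(1+r_f+\alpha\mu)-\frac12\bar{A}^2W^2\alpha^2\sigma^2\), but with \(\bar{A}^2\) in place of \(\sum_i\lambda_iA_i^2\).

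\textbf{Conclusion.} The two objectives are related by a strictly monotone transformation: a negative constant, then the exponential, then multiplication by \(\bar{A}\). They therefore induce the same preference ordering over \(\alpha\) and have the same unique maximizer, \(\alpha^\star=\mu/(\sum_i\lambda_iA_iW\sigma^2)\), consistent with (\ref{sol_expprod_CARA}).

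The main obstacle is stating the sign convention precisely. Because CARA utilities are negative, \(U_i^{\lambda_i}\) is not well defined for non-integer \(\lambda_i\). The proof must therefore say explicitly that "maximizing the expected product" means minimizing \(\mathbb{E}\left[\prod_i(-U_i)^{\lambda_i}\right]\), as in the footnote. Once that is fixed, "equivalent" means the same argmax, and this follows from the monotone-transformation argument above.
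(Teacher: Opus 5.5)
Your proof is correct and follows essentially the same route as the paper. Both arguments reduce the expected product of CARA utilities to a single CARA agent with \(A_h=\sum_i\lambda_iA_i\), use the normal moment generating function, and note that the resulting maximizer \(\mu/(\sum_i\lambda_iA_iW\sigma^2)\) coincides with the first-order-condition solution of the simplified Mean-Variance objective. The only difference is that the paper matches the two maximizers, whereas you show the objectives are related by a strictly increasing transformation and make the sign convention explicit; this is a slightly stronger form of the same equivalence.
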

\begin{proof}
    In the Appendix Section \ref{sec_AppProofs}.
\end{proof}

Proposition \ref{prop_sumsmv} demonstrates that while under  the unitary framework maximizing Mean-Variance utility is equivalent to maximizing the simplified Mean-Variance utility, it is not the case under the collective framework. 

Finally, consider the Nash Bargaining problem:

\begin{equation}
    \max_{\alpha} \;\mathbb{E} [U_{h}(C_h)] = \max_{\alpha} \; \prod_{i} \left( \mathbb{E}[U^{cara}_i(C_h)] - \Lambda_i\right)^{\lambda_i}
    \label{obj_NB_CARA}
\end{equation}
The following proposition shows that a Nash Bargaining solution can also be a non-monotonic function of the risk aversion of a member.

\begin{proposition}
\label{prop_nb}
     There exists \(\theta  \in \mathbb{R}_{+}^6 \) and \(\Lambda_1, \Lambda_2 \neq 0\) such that solution \(\alpha^\star\) to the household problem in (\ref{obj_NB_CARA}) is a non-monotonic function of the risk aversion of a member.
\end{proposition}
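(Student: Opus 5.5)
The plan is to obtain the Nash Bargaining example as a perturbation of the sum-of-expected-utilities example from Proposition \ref{prop_nonmonotonicity_sum}, by sending the disagreement points to $-\infty$. I will not construct a new example directly.

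Under CARA utility and normal returns we have $\mathbb{E}[U^{cara}_i(C_h)] = -\frac{1}{A_i}\exp\!\big(-A_iW(1+r_f+\alpha\mu)+\tfrac12 A_i^2W^2\alpha^2\sigma^2\big)$. This is strictly negative. So for the Nash product in (\ref{obj_NB_CARA}) to be well defined we need $\mathbb{E}[U_i]-\Lambda_i>0$, which suggests taking $\Lambda_i<0$. This is allowed, since only $\Lambda_i\neq 0$ is required.

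\textbf{Step 1 (reduction).} Set $\Lambda_i=-K c_i$ with $c_i>0$ fixed and $K\to\infty$. Taking logs (a monotone transformation), the objective becomes
\[
\sum_i \lambda_i \log(Kc_i) + \sum_i \lambda_i \log\!\Big(1+\tfrac{\mathbb{E}[U_i]}{Kc_i}\Big).
\]
Subtract the constant and multiply by $K$. The objective then converges, uniformly for $\alpha$ in a compact interval $[0,\bar\alpha]$ and $A_2$ in a compact interval, to $\sum_i (\lambda_i/c_i)\,\mathbb{E}[U_i]$. The reason is that $\mathbb{E}[U_i]$ is bounded there and $K\log(1+x/K)=x+O(1/K)$ uniformly, with the same holding for first and second $\alpha$-derivatives. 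Choosing $c_i$ so that the normalized weights $(\lambda_i/c_i)/\sum_j(\lambda_j/c_j)$ equal the Pareto weights used in the CARA example of Proposition \ref{prop_nonmonotonicity_sum} makes the limit problem exactly (\ref{obj_sum_public}) for that example.

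\textbf{Step 2 (pick three points).} From the CARA example in Proposition \ref{prop_nonmonotonicity_sum}, take $A_2^a<A_2^b<A_2^c$ with $\alpha^\star(A_2^a)>\alpha^\star(A_2^b)<\alpha^\star(A_2^c)$. Let $\delta>0$ be the smaller of the two gaps. These solutions lie in $(0,\bar\alpha)$, and by Lemma \ref{lemma_inbetween} they lie between the individual optima $\mu/(A_iW\sigma^2)$, so a compact $\alpha$-range suffices.

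\textbf{Step 3 (stability of the argmax).} The limit objective is strictly concave in $\alpha$ near its maximizer, since each $\mathbb{E}[U_i]$ is $-\frac1{A_i}e^{q_i(\alpha)}$ with $q_i$ a convex quadratic, hence strictly concave in $\alpha$. So its maximizer is unique and well separated: outside a $\delta/3$-neighbourhood of the maximizer, the objective is below its maximum by some $\eta>0$. Uniform convergence with error less than $\eta/2$ then forces the Nash maximizer $\alpha^\star_K(A_2)$ to lie within $\delta/3$ of $\alpha^\star(A_2)$ for each of the three values, once $K$ is large. The strict ordering is therefore preserved, and $\alpha^\star_K$ is non-monotonic in $A_2$ with $\Lambda_1,\Lambda_2\neq0$.

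The main obstacle is Step 3, specifically the uniformity of the approximation. We need the NB objective to stay well defined, i.e.\ $Kc_i>|\mathbb{E}[U_i]|$, on the whole compact set. We also need the NB maximizer not to escape to the boundary $\bar\alpha$ or to large $|\alpha|$. I would handle the latter by noting that $\mathbb{E}[U_i]\to-\infty$ as $|\alpha|\to\infty$, so the feasible region where the Nash product is positive is a bounded interval that shrinks toward $[0,\bar\alpha]$-type sets. As a fallback, one can simply exhibit concrete numbers ($\mu=0.1$, $\sigma=0.2$, $r_f=0$, $W=2$, $A_1=2$, $\lambda_i=0.5$, $\Lambda_i=-K$ with $K$ large) and verify the three maximizers numerically, as in the figures.
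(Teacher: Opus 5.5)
Your argument is correct in substance, but it takes a different route from the paper. The paper proves the proposition only by exhibiting one numerically solved instance: $\mu=0.1$, $\sigma=0.2$, $r_f=0$, $W=2$, $\lambda_1=\lambda_2=0.5$, $A_1=2$, $\Lambda_1=\Lambda_2=-1$. It first notes that the first-order condition of (\ref{obj_NB_CARA}) has no closed form. You instead derive the Nash bargaining case from Proposition \ref{prop_nonmonotonicity_sum}. With $\Lambda_i=-Kc_i$ and $K\to\infty$, the normalized log Nash product is, to first order, the utilitarian objective with weights $\lambda_i/c_i$. The strict non-monotonicity of the CARA sum example therefore survives the perturbation.

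\emph{What each route buys.} Yours explains why Nash bargaining inherits the pathology: it nests the additive model as the limit of very unfavourable threat points. It also needs no new computation. The paper's route gives a concrete example at a moderate disagreement point, whereas yours only gives examples for $|\Lambda_i|$ sufficiently large. Both routes ultimately rest on numerics, since the example behind Proposition \ref{prop_nonmonotonicity_sum} is itself numerical.

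One step needs repair: keeping the Nash maximizer away from infinity. The feasible set $\{\alpha:\mathbb{E}[U_i](\alpha)>\Lambda_i,\ i=1,2\}$ \emph{expands} to all of $\mathbb{R}$ as $K\to\infty$; it does not shrink. So boundedness of that set for each fixed $K$ does not keep the maximizer in a fixed compact set, and uniform convergence on $[0,\bar\alpha]$ alone does not localize it.

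The clean fix is the one-sided bound $\log(1+y)\le y$ for $y>-1$. Define $F_K(\alpha)=\sum_i\lambda_i K\log\bigl(1+\mathbb{E}[U_i](\alpha)/(Kc_i)\bigr)$ and $G(\alpha)=\sum_i(\lambda_i/c_i)\mathbb{E}[U_i](\alpha)$. Then:
\begin{itemize}
\item $F_K\le G$ on the whole feasible set.
\item $F_K(\alpha^\star)\to G(\alpha^\star)$ at the maximizer $\alpha^\star$ of $G$.
\item Any maximizer $\alpha_K$ of $F_K$ satisfies $G(\alpha_K)\ge F_K(\alpha_K)\ge F_K(\alpha^\star)>G(\alpha^\star)-\eta$ for large $K$.
\item $G$ is globally strictly concave, so $G\le G(\alpha^\star)-\eta$ outside the $\delta/3$-neighbourhood of $\alpha^\star$. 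Hence $\alpha_K$ lies within $\delta/3$ of $\alpha^\star$.
\end{itemize}
With this, your uniform-convergence and derivative-convergence claims become unnecessary.

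It is also worth recording that $\log(Kc_i+\mathbb{E}[U_i])$ is the logarithm of a positive, strictly concave function, and hence strictly concave. The Nash solution is therefore unique for every $K$, so ``the solution $\alpha^\star$'' in the statement is well defined.
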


\begin{proof}
    In the Appendix Section \ref{sec_AppProofs}.
\end{proof}

\subsection{Bargaining, Wealth, and Household Portfolio Choice}
\label{section_Betas}
In this section, I discuss the role played by the Pareto weights \(\lambda_1,\lambda_2\) and the wealth level \(W\) in the household portfolio choice, and highlight further concerns with applications of the collective framework. First, I show that the inference about the Pareto weights in an empirical study is very sensitive to the assumptions of the collective decision-making framework. When it comes to the level of household wealth, I show that while it has no effect on the portfolio choice in the unitary framework under CRRA preferences, such independence breaks in a collective framework. Finally, I discuss issues of this framework related to the mutual identification of Pareto weights and the level of wealth.

\subsubsection{Pareto Weights}
 
As discussed in Section \ref{section_Setup}, it is common to think of the Pareto weights as coefficients summarizing the intra-household bargaining: for example, a greater \(\lambda_1\) is commonly interpreted as a greater ability of person \(1\) to represent their preferences or influence the joint decisions. It is straightforward to show that all of the models discussed in Section \ref{section_Results} obey the following intuitive property: an increase in the bargaining power (\(\lambda_i\)) of the more (less) risk averse spouse decreases (increases) the risky share of the household (for example, see \cite{yilmazer2015portfolio} for a proof under the sum of expected utilities approach). However, I argue that different assumptions of the model result in vastly different estimates of Pareto weights for a given observation of household risky share and individual risk aversion levels. 
 
Figure \ref{fig_lambdaident} plots optimal household and individual risky shares against the Pareto weight of the first individual (\(\lambda_1\)). Since the first member is assumed to be the less risk averse one, the household risky share always increases with the Pareto weight of the first member under both CARA and CRRA preferences. 

\begin{figure}[ht]
\centering

\begin{subfigure}{0.45\textwidth}
    \includegraphics[width=1\linewidth]{"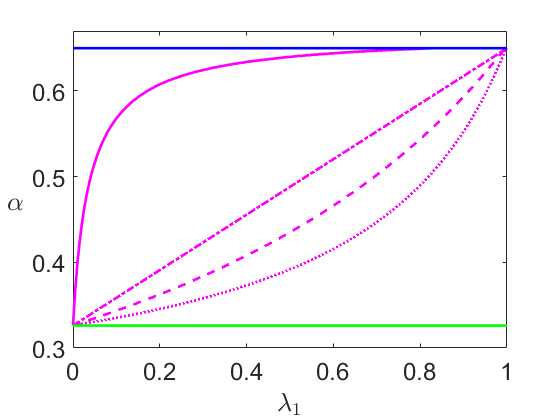"}
    \caption{CARA}
\end{subfigure}
\begin{subfigure}{0.45\textwidth}
    \includegraphics[width=1\linewidth]{"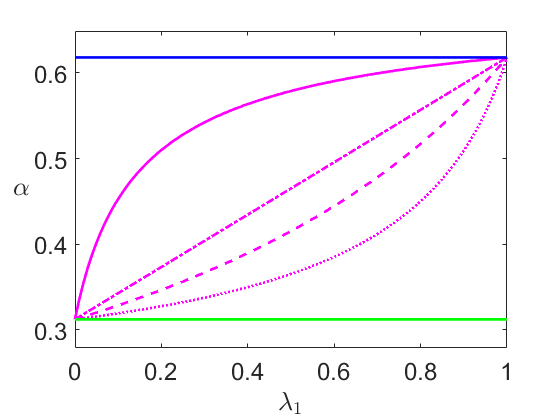"}
    \caption{CRRA}
\end{subfigure}
\begin{tikzpicture}
\footnotesize
    \def\xA{0}
    \def\xB{3.6}
    \def\xC{7.4}

    \draw[blue, line width=1.2pt] (\xA,0) -- ++(0.5,0);
    \node[right] at (\xA+0.6,0) {$\alpha_1^\star$};

    \draw[green, line width=1.2pt] (\xB,0) -- ++(0.5,0);
    \node[right] at (\xB+0.6,0) {$\alpha_2^\star$};

    \draw[magenta, line width=1.2pt] (\xC,0) -- ++(0.5,0);
    \node[right] at (\xC+0.6,0) {$\alpha^\star$, Sum of Exp};

    \draw[magenta, dashed, line width=1.2pt] (\xA,-0.45) -- ++(0.5,0);
    \node[right] at (\xA+0.6,-0.45) {$\alpha^\star$, Prod of Exp};

    \draw[magenta, dotted, line width=1.2pt] (\xB,-0.45) -- ++(0.5,0);
    \node[right] at (\xB+0.6,-0.45) {$\alpha^\star$, Exp of Prod};

    \draw[magenta, dash dot, line width=1.2pt] (\xC,-0.45) -- ++(0.5,0);
    \node[right] at (\xC+0.6,-0.45) {$\alpha^\star$, \cite{gu2024gender}};
\end{tikzpicture}
    \caption{\footnotesize 
    This figure plots the optimal risky share \(\alpha\), against the Pareto weight of the first member of the household (\(\lambda_1\)) for CARA (Panel A) and CRRA (Panel B) individual utility functions. Blue curves plot \(\alpha^\star_1\) and green curves plot \(\alpha^\star_2\) which are the solutions to the single's problem and are defined in equation (\ref{obj_unitary}). The magenta curves depict the household-level optimal risky share \(\alpha^\star\) for four different modeling approaches indicated by the legend. Set, \(\mu= 0.1, \ \sigma=0.2, \ r_f=0, \ W=2\). In Panel A set \(A_1=2, \ A_2=4\) and in Panel B set \(\gamma_1=4, \ \gamma_2=8\).  There are 10,000 simulations of a normally distributed shock and consumption is always positive in practice. 
    }
    \label{fig_lambdaident}
\end{figure}

However, it is evident from Figure \ref{fig_lambdaident} that different aggregation methods result in vastly different patterns. Figure \ref{fig_lambdavar} plots a Pareto weight that would be required to fit an observed household risky share that is exactly between the two individually optimal ones (i.e. \(\alpha_h=\frac{\alpha_1+\alpha_2}{2}\)). Under CARA (CRRA), required Pareto weight of the first individual varies from 0.04 (0.12) to 0.8 (0.83) for the sum of expected utilities and expectation of product of utilities respectively. 

One alternative approach to the collective models discussed in Section \ref{subsection_public} is to simply assume that the household risky share is the weighted average of the individually optimal risky shares where the weights reflect the individual bargaining power (i.e. \(\alpha^\star  = \lambda_1\alpha^\star_1+\lambda_2\alpha^\star_2\)). Section \ref{section_Literature} argues that the specification of \cite{gu2024gender} is equivalent to this approach. Given the specification imposed by the authors, Pareto weights reflect the distance between the household and individual optimal risky shares linearly. This will always result in an arguably intuitive Pareto weight of 0.5 when the household risky share is the average of the two individually optimal ones as in Figure \ref{fig_lambdavar}. Note, however, that such linear relationship would not be maintained if one averages the risk aversion parameters instead of averaging the risky shares directly.\footnote{This is because optimal risky share depends inversely on the risk aversion.}

\begin{figure}[ht]
\centering
\begin{subfigure}{0.45\textwidth}
    \includegraphics[width=1\linewidth]{"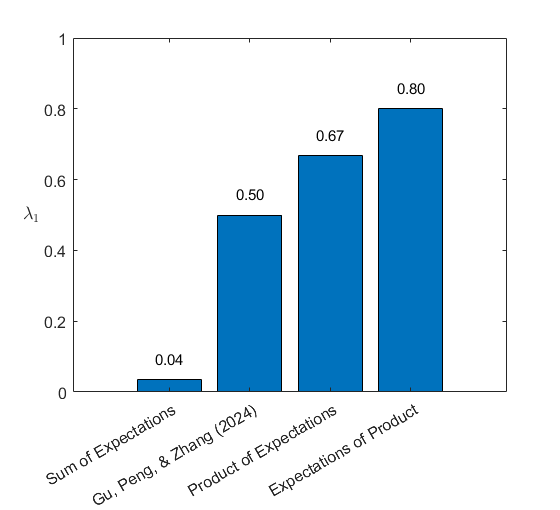"}
    \caption{CARA}
\end{subfigure}
\begin{subfigure}{0.45\textwidth}
    \includegraphics[width=1\linewidth]{"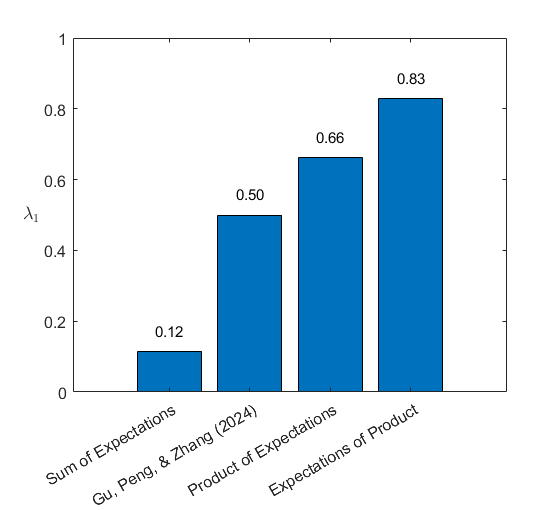"}
    \caption{CRRA}
\end{subfigure}
    \caption{\footnotesize 
    This figure plots the Pareto weight \(\lambda_1\) required to fit a hypothetical symmetric household risky  share (i.e. \(\alpha_h=\frac{\alpha_1+\alpha_2}{2}\)) for different models and for CARA (Panel A) and CRRA (Panel B) individual utility functions. Set, \(\mu= 0.1, \ \sigma=0.2, \ r_f=0, \ W=2\). In Panel A set \(A_1=2, \ A_2=4\) and in Panel B set \(\gamma_1=4, \ \gamma_2=8\).
    }
    \label{fig_lambdavar}
\end{figure}

\subsubsection{Wealth Level}
\begin{figure}[htb]
\centering
\begin{subfigure}{0.45\textwidth}
    \includegraphics[width=1\linewidth]{"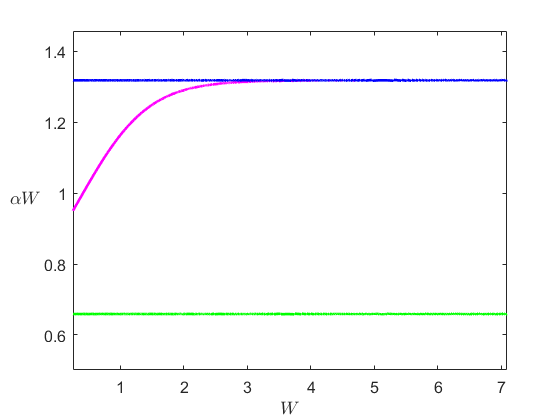"}
    \caption{CARA, \(\lambda_1=0.5\)}
\end{subfigure}
\begin{subfigure}{0.45\textwidth}
    \includegraphics[width=1\linewidth]{"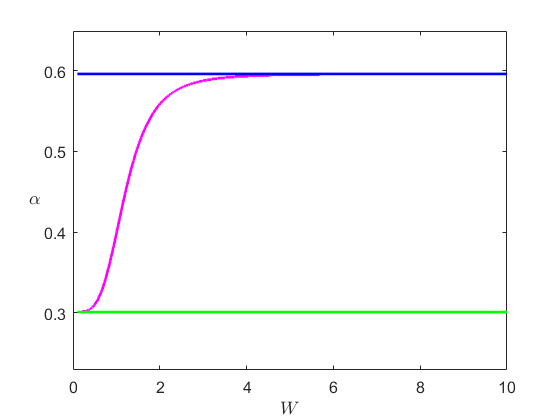"}
    \caption{CRRA, \(\lambda_1=0.5\)}
\end{subfigure}
\begin{subfigure}{0.45\textwidth}
    \includegraphics[width=1\linewidth]{"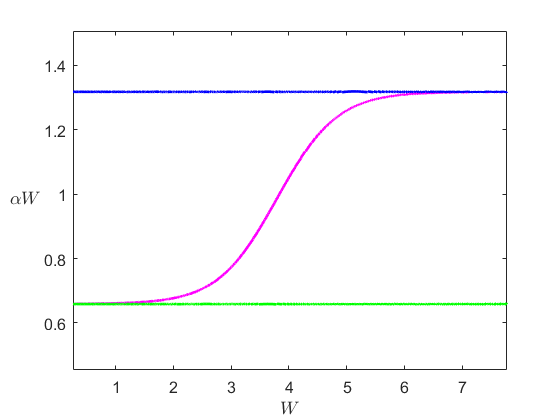"}
    \caption{CARA, \(\lambda_1=0.001\)}
\end{subfigure}
\begin{subfigure}{0.45\textwidth}
    \includegraphics[width=1\linewidth]{"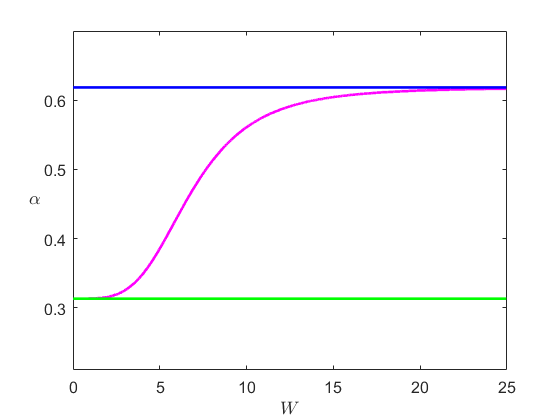"}
    \caption{CRRA, \(\lambda_1=0.001\)}
\end{subfigure}
    \caption{\footnotesize 
        This figure illustrates the optimal risk-taking behavior (risky investment \(\alpha W\) for CARA in Panels A and C and the optimal share \(\alpha\) for CRRA in Panels B and D), against the wealth level of the household (\(W\)). Blue and green curves plot solutions to the single's problem defined in equation (\ref{obj_unitary}). The magenta curves depict household-level solutions. Set, \(\mu= 0.1, \ \sigma=0.2, \ r_f=0,\). In Panels A and C set \(A_1=2, \ A_2=4\) and in Panels B and D set \(\gamma_1=4, \ \gamma_2=8\). In Panels A and B set \(\lambda_1 = 0.5\) and in Panels C and D set \(\lambda_1=0.001\).  There are 10,000 simulations of a normally distributed shock and consumption is always positive in practice. 
    }
    \label{fig_Desmos_CARA_sum_W}
\end{figure}

While the individual CRRA solution is always independent of the wealth level, the collective solution using sum of expected CRRA utilities (see Section \ref{subsection_sum_CARA}) is not. Figure \ref{fig_Desmos_CARA_sum_W} plots the optimal risky share (risky investment amount for CARA) against the level of wealth for both CARA and CRRA individual utilities. The optimal risk-taking increases with the wealth level\footnote{The paper by \cite{neelakantan2013portfolio} shows that the risk-taking increases with the wealth level of the household in a similar collaborative framework (Proposition 1). However, the wealth sensitivity of the risky share highlighted here is distinct from the one discussed by the authors. This is due to the fact that the results in \cite{neelakantan2013portfolio} rely on an arbitrary parameter \(\delta\) which essentially rescales the Pareto weight (and hence the marginal utility) of the second member (higher delta means lower weight placed on the utility of the second member). This \(\delta\) is reverse engineered such that for a given level of wealth \(\gamma^{hh}=\frac{\gamma_1+\gamma_2}{2}\). Hence, \(\delta\) depends on the wealth level and the difference in risk aversion parameters. For example, under equal Pareto weights, when the second member is the more risk averse one, higher wealth would require a lower \(\delta\) which would make the solutions more similar to the individual solution of the second member. I note that, this approach is not micro-founded and in some sense monotonicity is artificially enforced. Authors do not show what happens to the household solution when \(\delta\) is fixed.}. More importantly, under both CARA and CRRA, as wealth levels increase (decrease), household-level solution converges to the individual solution of the less (more) risk averse individual. Under reasonable model parametrization and CRRA preferences, for a sufficiently high wealth level, the household solution closely resembles the solution of the less risk averse individual even using a very low Pareto weight for that individual (\(\lambda_1=0.001\)). Similarly, for a sufficiently low wealth level, the household solution closely resembles the solution of the more risk averse individual even using a very low Pareto weight for that individual (\(\lambda_2=0.001\)). This abrupt switching pattern makes application of a collective decision making framework whereby individual utility functions are aggregated additively, problematic in the context of portfolio choice and heterogeneous household wealth. As discussed in Section \ref{subsec_expprod}, however, the solution is independent of the wealth level under expected product of CRRA utilities. 

Another relevant concern is related to the mutual identification of the wealth level and the Pareto weights of individuals. As can be seen from Figure \ref{fig_Desmos_CARA_sum_W}, more than one combination of \(\lambda_1\) and \(W\) can produce the same optimal risky share \(\alpha^\star\). This is not an issue if one observes either the wealth level of the household or the bargaining weights of individual members and is willing to use it directly in the model. If, however, both are unobserved, then the model will be unable to pin down both the wealth level and the Pareto weights.

\section{Literature Implications}
\label{section_Literature}

In this section, I review the existing literature in light of the findings of Section \ref{section_Results} and provide a reconciliation with some of the counterintuitive results. In all of the papers discussed here, household makes a decision under uncertainty by aggregating heterogeneous individual preferences of its members in one way or another. 

\cite{gu2024gender} is a recent empirical work interested in bargaining asymmetry in household portfolio choice decisions. Using Australian survey data, authors structurally estimate bargaining weights and find a large gender gap: ``in an average household, the husband’s bargaining power is 60\%, whereas the wife’s is 40\%". The authors also discuss gender and non-gender based determinants of this gap.

In order to estimate these bargaining weights, authors rely on the same setup as the one described in Section \ref{section_Setup}. In particular their setup is also static, with full spousal commitment and consumption of a single public good. Authors assume CARA individual utilities and normal distribution of shocks. Despite the fact that authors allow for participation costs, these costs can be set to zero in order to reconcile their approach with the results of Section \ref{section_Results}. In \cite{gu2024gender}, authors assume the household first aggregates risk aversion preferences of its members and then solves the standard individual maximization problem. Using my notation, this results in the following risky share solution. 

\begin{equation}
    \alpha^\star = \frac{\mu}{ \frac{1}{\frac{\lambda_1}{A_1}+\frac{\lambda_2}{A_2} } \sigma^2}
            \label{sol_Zhang}
\end{equation}
        
This household solution is monotonically decreasing in the individual risk aversion levels. Surprisingly, however, authors claim that their modeling approach is equivalent to an approach whereby the household maximizes individual expected utilities aggregated additively:

\begin{quote}
\footnotesize
Under the conditions detailed in Section B of the Internet Appendix, this gives an equivalent expression as in the classical collective bargaining model in which the household utility function is a weighted average of the individual’s utility (Manser and Brown 1980; McElroy and Horney 1981; Chiappori 1988a, 1992). Therefore, in our model, aggregating individual risk preferences is equivalent to aggregating individual utility functions.
\end{quote}

In Section \ref{section_Results}, I have shown that aggregating individual CARA utilities by a weighted average results in a non-monotonic solution which contradicts the equivalence claim made by \cite{gu2024gender}. In section \ref{section_otherapp}, I show that using Mean-Variance or simplified Mean-Variance individual utility functions also results in solutions different from the one obtained by the authors. Indeed, this equivalence claim is not accurate. Authors' argument in Section B of the Internet Appendix essentially amounts to showing that under certain assumptions (in particular when \(1/A_h = \lambda_1/A_1+\lambda_2/A_2\)):
\begin{equation}
U_h(\alpha^\star) = \lambda_1U_1(\alpha^\star_1) + \lambda_2U_2(\alpha^\star_2)
\end{equation}
This, however, does not imply:
\begin{equation}
U_h(\alpha) = \lambda_1U_1(\alpha) + \lambda_2U_2(\alpha) \; \; \forall \alpha
\end{equation}

Despite the fact that the authors' approach is not motivated by the classical theory of collective bargaining, it should be noted that the solution in equation (\ref{sol_Zhang}) is intuitively appealing. For example, it can be motivated by an argument that the household chooses a weighted average of the two individually optimal risky shares (i.e., \(\alpha^\star = \lambda_1\alpha_1^\star + \lambda_2\alpha_2^\star\)). It is also the same as the risky share obtained by first splitting the wealth according to the bargaining weights, investing it separately as dictated by individual preferences, and then combining it back.

Finally, in Section \ref{section_Betas}, we have seen that using a different aggregation method would lead to a different estimate of bargaining weights. Hence, it is natural to ask how the estimates in \cite{gu2024gender} would change if a different specification is used. 

Another empirical work on intra-household portfolio choices is a paper by \cite{yilmazer2015portfolio} which uses the HRS data. The authors do not impose functional form assumptions on the individual utility functions yet rely on the same framework. In particular, it is assumed that household maximizes the weighted sum of expectations. The authors claim that their model predicts that the risky share will increase as the risk aversion of a spouse decreases.

\begin{quote}
\footnotesize
    One testable hypothesis derived from the theoretical framework is that the household invests a larger proportion of its wealth in risky assets when the spouse with more bargaining power is more risk tolerant.
\end{quote}

However, this claim is not supported by the theoretical framework used in the paper and it is inaccurate. For example, we have seen in Section \ref{section_Results} that for both CARA and CRRA individual utility functions the household can decrease its investments in the risky asset when one of the spouses gets more risk tolerant (less risk averse), even when that spouse has greater bargaining power. This demonstrates that monotonicity might have been previously falsely taken for granted in the literature.
    
\cite{thornqvist2014bargaining} study participation, risky share and diversification decisions using administrative data from Sweden. The authors show empirically that an increase in female bargaining power (who are more risk averse on average) results in household taking less risk. The authors rely on a similar theoretical framework with full spousal commitment and consumption of a single public good. In their model, authors have two periods of consumption and the shocks have a binary distribution. The household is assumed to maximize a sum of expected CRRA utilities. Since authors do not directly utilize the model in their empirical analysis the solution to the household problem is not discussed in the main text. In particular, non-monotonicity implicit in their framework is not acknowledged.

A paper by \cite{addoum2017household} studies the portfolio choice of married couples in HRS as they transition into retirement. The author explicitly acknowledges that preferences can be heterogeneous and that the bargaining power may shift. The author uses a similar framework except there are three periods of consumption. Individuals are assumed to have CRRA preferences and the shocks follow a log-normal distribution. In contrast to the other papers, the author assumes that the household maximizes the expectation of product of the individual utility functions. As I show in Section \ref{subsec_expprod}, risk aversion of the household will then be a weighted average of the two risk aversions (i.e., always monotonic). Hence, in such a model the risky share of the household will be a weighted harmonic average of the individual risky shares. As discussed in Section \ref{section_Betas}, this means that the household risky share is always closer to the risky share of the spouse with the higher risk aversion which would require an asymmetric bargaining weight to fit a symmetric portfolio choice.

Finally, a paper by \cite{mazzocco2004saving} is an influential paper interested in the precautionary savings behavior of couples with heterogeneous risk aversion and prudence. Unlike in the framework described in Section \ref{section_Setup}, the author assumes that the household can only invest in a risk-free asset and each spouse separately consumes their own private good. The uncertainty in the model comes only from the stochastic labor income and saving less implies taking more risk. Author obtains the following unexpected theoretical result: household savings can decrease when one spouse becomes more risk averse. My paper complements this work by showing that this counterintuitive result can be obtained with the public consumption assumption. In addition, \cite{mazzocco2004saving} proves a theorem on non-monotonicity: increasing the risk aversion of the \textit{less} risk averse member can increase the risk tolerance of the household. This is in contrast to my proposition \ref{prop_monotonicity_CARA} which shows that increasing the risk aversion of the less risk averse member can only decrease the risky share of the household.

\section{Conclusion}
\label{section_Conclusion}

In this paper, I analyze applications of the collective model to the household portfolio choice problem and highlight major shortcomings. In particular, I show that household risk-taking may increase when one of the spouses gets more risk averse\footnote{Despite the fact that household formation is not studied in the current theoretical framework, it can be argued that perfect matching on risk preferences cannot be guaranteed if risk aversion is heterogeneous across genders in aggregate.}. Since household risk-taking also decreases with the risk aversion of the spouse, I refer to this pattern as non-monotonicity. I provide some precise characterization of this phenomenon and demonstrate that it occurs on economically meaningful levels of exogenous variables. In addition, I demonstrate that irrespective of the bargaining weights, for lower levels of wealth, the household closely mimics the more risk averse member and for slightly higher levels of wealth the household closely mimics the less risk averse member. These issues pose a major challenge for applications of the collective approach to the portfolio choice problem.

I review the recent empirical literature which has explicitly or implicitly relied on the given approach to study intra-household financial risk-taking and highlight some potential confusion. A recent paper by \cite{gu2024gender} proposes a preference aggregation approach which results in a solution free of the non-monotonicity problem. The authors, however, falsely claim that this approach is equivalent to the classical collective bargaining model. It should be noted that while the proposed solution in \cite{gu2024gender} seems like an intuitive one to use\footnote{I show that the approach in \cite{gu2024gender} is equivalent to choosing a risky share based directly on a weighted average of the individually optimal risky shares. For another example see \cite{gnagey2020spousal}  whereby instead of relying on a collective decision making approach, the outcome variable itself is aggregated by weighted averaging the counterfactual outcome variables. This solution also spans the whole range of Pareto efficient risky shares. In the current context full private consumption assumption could also justify this solution (\cite{wilson1968theory}).}, it is not micro-founded. In \cite{addoum2017household}, the household aggregates individual utilities by first taking their product and then the expectation. In this case, household solution behaves intuitively when it comes to monotonicity and wealth sensitivity. However, I argue that such an approach has its own shortcomings as it requires asymmetric bargaining weights in order to fit an arguably fair household allocation.

Future empirical research interested in intra-household decision-making under uncertainty and estimation of the gender asymmetry in bargaining requires an alternative modeling approach. This approach should achieve desired theoretical properties not merely by relying on an alternative consumption specification or other assumptions of the collective model, but by appropriate fundamental characterization of the problem.

\newpage
\printbibliography

\newpage
\section*{Appendix}

\renewcommand{\thesubsection}{\Alph{subsection}}

\subsection{Proofs of Propositions and Lemmas}
\label{sec_AppProofs}

\subsubsection*{Proposition \ref{prop_nonmonotonicity_sum}}

\begin{proof}
I prove the proposition by providing an example for each individual utility specification. Let us start with the case of CARA. Here, individual utility function can be written as,
    \begin{equation}
        U^{cara}_i = \frac{-exp(-A_iW(1+r_f+\alpha \tilde{x}))}{A_i}
    \end{equation}
Hence, because for any \(\tilde{y} \sim \mathcal{N}(\mu, \sigma^2)\), we have \(    \mathbb{E}[exp(\tilde{y})] = exp \left(\mu + \frac{\sigma^2}{2} \right)     \label{expproperty} \), individual expected utility can be written as\footnote{Which only under \textit{unitary} framework is equivalent to: \( U^{cara}_i = A_i(1+r_f+\alpha \mu)-\frac{1}{2}A_i^2 \alpha^2 \sigma^2 \).  This is related to the issues with \cite{gu2024gender}.},

\begin{equation}
\label{def_ExpectedUtility_CARA}
    \mathbb{E} (U^{cara}_i) =  \frac{-exp \left(-A_iW(1+r_f+\alpha\mu) +\frac{A_i^2W^2\alpha^2\sigma^2}{2} \right)}{A_i}
\end{equation}
Then, the objective in (\ref{obj_sum_public}) can be written as,
\begin{equation*}
\begin{split}
    \lambda_1 \frac{-exp \left(-A_1W(1+r_f+\alpha\mu) +\frac{A_1^2W^2\alpha^2\sigma^2}{2} \right)}{A_1} +    \\ \lambda_2 \frac{-exp \left(-A_2W(1+r_f+\alpha\mu) +\frac{A_2^2W^2\alpha^2\sigma^2}{2} \right)}{A_2}
\end{split}
\end{equation*}
and the F.O.C. is:

\begin{equation}
\begin{split}
    \lambda_1 exp \left(-A_1W(1+r_f+\alpha^\star\mu) +\frac{A_1^2W^2{\alpha^\star}^2\sigma^2}{2} \right)(-\mu + \alpha^\star A_1W \sigma^2)  + \\
    \lambda_2 exp \left(-A_2W(1+r_f+\alpha^\star\mu) +\frac{A_2^2W^2{\alpha^\star}^2\sigma^2}{2} \right)(-\mu + \alpha^\star A_2W \sigma^2) = 0
\end{split}
\label{FOC_sum_CARA}
\end{equation}
There is no closed form solution to equation (\ref{FOC_sum_CARA}) but it can be solved for numerically. Figure \ref{fig_nonmonotonicity_sum}, Panel A demonstrates one such solution where \(\alpha^\star\) is a non-monotonic function of \(A_2\).

Let us now consider the case of CRRA. Here, individual utility can be written as,
\begin{equation}
   U^{crra}_i =  \frac{\left[W(1+r_f+\alpha\tilde{x})\right]^{1-\gamma_i}}{1-\gamma_i}
\end{equation}
and the individual expected utility can be written as,

\begin{equation}
\label{def_ExpectedUtility_CRRA}
    \mathbb{E} (U^{crra}_i) =  \frac{\mathbb{E}\left(\left[W(1+r_f+\alpha\tilde{x})\right]^{1-\gamma_i}\right)}{1-\gamma_i}
\end{equation}
Unlike in the case of CARA, there is no equivalent expression for the objective in (\ref{obj_sum_public}) or the F.O.C. which does not include an expectation operator. One can solve for \(\alpha^\star\) directly using a Monte-Carlo simulation. Figure \ref{fig_nonmonotonicity_sum}, Panel B demonstrates one such solution where \(\alpha^\star\) is a non-monotonic function of \(\gamma_2\).

\begin{figure}[bht]
\centering
\begin{subfigure}{0.45\textwidth}
    \includegraphics[width=1\linewidth]{"fig_MATLAB_CARA_sum.png"}
    \caption{CARA}
\end{subfigure}
\begin{subfigure}{0.45\textwidth}
    \includegraphics[width=1\linewidth]{"fig_MATLAB_CRRA_sum.png"}
    \caption{CRRA}
\end{subfigure}

    \caption{\footnotesize 
    This figure plots the optimal risky share \(\alpha\), against the risk aversion of the second member of the household for CARA (Panel A) and CRRA (Panel B) individual utility functions. Blue curves plot \(\alpha^\star_1\) and green curves plot \(\alpha^\star_2\) which are the solutions to the single's problem and are defined in equation (\ref{obj_unitary}). The magenta curves depict the household-level optimal risky share \(\alpha^\star\), which solves the objective in (\ref{obj_sum_public}). Set, \(\mu= 0.1, \ \sigma=0.2, \ r_f=0, \ W=2,\) and \(\lambda_1=\lambda_2=0.5\). In Panel A set \(A_1=2\) and in Panel B set \(\gamma_1=4\).
    }
\end{figure}

Since we found an example for both CARA and CRRA where \(\alpha^\star\) is a non-monotonic function of the risk aversion of a member we have completed the proof.\end{proof}

\subsubsection*{Lemma \ref{lemma_IFT}}

\begin{proof}

Using the implicit function theorem one can write:
\begin{equation}
    \frac{\partial \alpha^\star}{\partial A_2} = - \frac{\frac{\partial}{\partial A_2}\partial (V(\alpha^\star,A_2)/\partial \alpha) }{\frac{\partial}{\partial \alpha}\partial (V(\alpha^\star,A_2)/\partial \alpha)}
\label{equation_IFT}
\end{equation}
where \(V\) is the household objective function \(\sum_i \lambda_i \mathbb{E} [U_i(C_h)]\) evaluated at the solution \(\alpha^\star\) and \(\partial (V(\alpha^\star,A_2)/\partial \alpha)=\lambda_1M_1(\alpha^\star)+\lambda_2M_2(\alpha^\star) \). The denominator of the RHS of equation (\ref{equation_IFT}) can thus be written as:
\begin{equation}
\label{equation_IFT_denom}
\begin{split}
    \frac{\partial}{\partial \alpha}\partial (V(\alpha^\star,A_2)/\partial \alpha) = \lambda_1\frac{\partial M_1(\alpha^\star)}{\partial \alpha} + \lambda_2\frac{\partial M_2(\alpha^\star)}{\partial \alpha} 
\end{split}
\end{equation}
Since the S.O.C. of the individual maximization problem is always satisfied, both terms of the RHS of equation (\ref{equation_IFT_denom}) are negative under CARA and CRRA. Hence, the sign of the LHS is the same as the sign of the numerator which is equal to \(\lambda_2\frac{\partial M_2(\alpha^\star)}{\partial A_2}\)
\end{proof}

Below, I show the S.O.C. for the CARA utility function as an example (the arguments are analogous for the CRRA utility function). Dropping the star superscript from the optimal \(\alpha\):

\begin{equation}
    M_i(\alpha^\star) = - exp \left(-A_iW(1+r_f+\alpha\mu) +\frac{A_i^2W^2\alpha^2\sigma^2}{2} \right)(-W\mu + \alpha A_i W^2\sigma^2)
\end{equation}

\begin{equation*}
\begin{split}
    \frac{\partial M_i(\alpha^\star)}{\partial \alpha} = \\ 
    - exp \left(-A_iW(1+r_f+\alpha\mu) +\frac{A_i^2W^2\alpha^2\sigma^2}{2} \right) \times (-A_iW\mu + \alpha A_i^2 W^2\sigma^2) \times \\ (-W\mu + \alpha A_iW^2 \sigma^2) \\ 
   - exp \left(-A_iW(1+r_f+\alpha\mu) +\frac{A_i^2W^2\alpha^2\sigma^2}{2} \right) \times (A_iW^2 \sigma^2) < 0 \\
    - (-A_iW\mu + \alpha A_i^2W^2 \sigma^2) \times (-W\mu + \alpha A_i W^2 \sigma^2)  - A_i W^2\sigma^2 < 0 \\
    -(\alpha A_i W \sigma^2 -\mu)^2  - \sigma^2 < 0
\end{split}
\end{equation*}

\subsubsection*{Lemma \ref{lemma_inbetween}}

\begin{proof}
    Note that, since exponents are always positive, one of the following must be true for the F.O.C. in equation (\ref{FOC_sum_CARA}) to hold:
\begin{equation*}
    \begin{split}
        (-W\mu + \alpha^\star A_1 W^2\sigma^2) =       (-W\mu + \alpha^\star A_2 W^2\sigma^2) = 0 \\
        (-W\mu + \alpha^\star A_2 W^2\sigma^2) < 0 <   (-W\mu + \alpha^\star A_1 W^2 \sigma^2) \\
        (-W\mu + \alpha^\star A_1 W^2\sigma^2) < 0 <   (-W\mu + \alpha^\star A_2 W^2 \sigma^2)
    \end{split}
\end{equation*}
In the first case, we get \(\alpha^\star=\alpha^\star_1=\alpha^\star_2\).
In the second case, we get \(\alpha_1^\star<\alpha^\star< \alpha_2^\star\) and in the third case we get \(\alpha_2^\star<\alpha^\star< \alpha_1^\star\).  \end{proof}

\subsubsection*{Proposition \ref{prop_monotonicity_CARA}}

\begin{proof}
Without loss of generality, assume \(i=2\). Given Lemma \ref{lemma_IFT}, we need to show that \(\frac{\partial M_2(\alpha^\star)}{\partial A_2}<0\)
    \begin{equation}
\begin{split}
    \frac{\partial M_2(\alpha)}{\partial A_2} = \\ 
    - exp \left(-A_2W(1+r_f+\alpha\mu) +\frac{A_2^2W^2\alpha^2\sigma^2}{2} \right) \times \\ (-W(1+r_f+\alpha\mu) + A_2W^2\alpha^2\sigma^2) \times  (-W\mu + \alpha A_2W^2 \sigma^2) \\ 
     - exp \left(-A_2W(1+r_f+\alpha\mu) +\frac{A_2^2W^2\alpha^2\sigma^2}{2} \right) \times (\alpha W^2\sigma^2) < 0
\end{split}
\end{equation}
Note that by Lemma \ref{lemma_inbetween}, when \(\mu>0\) we have \(\alpha^\star>0\) since \(\alpha_i^\star>0\) for \(i \in \{1,2\}\). Hence, one can rewrite the condition as,
\begin{equation}
\begin{split}
 \label{ineq_IFT_CARA}
    - [-W(1+r_f+\alpha\mu) + A_2W^2\alpha^2\sigma^2] \times (-W\mu + \alpha A_2 W^2 \sigma^2) - \alpha W^2\sigma^2 < 0 \\
    [\alpha A_2W^2\sigma^2-W\mu -W(1+r_f)/\alpha] \times (\alpha A_2 W^2 \sigma^2-W\mu) + W^2\sigma^2 > 0 \\
    [\alpha A_2W\sigma^2-\mu -(1+r_f)/\alpha] \times (\alpha A_2 W \sigma^2-\mu) + \sigma^2 > 0 \\
    \left[\alpha -\mu/A_2W\sigma^2 -\frac{1+r_f}{\alpha A_2W\sigma^2} \right] \times (\alpha -\mu/A_2 W \sigma^2) + 1/A_2^2W^2\sigma^2 > 0 \\
    \left[\alpha^\star - \alpha_2^\star -\frac{\alpha_2^\star(1+r_f)}{\alpha^\star \mu} \right] \times (\alpha^\star - \alpha_2^\star) + 1/A_2^2W^2\sigma^2 > 0\\
    \left[(\alpha^\star - \alpha_2^\star\left(1 +\frac{(1+r_f)}{\alpha^\star \mu} \right)\right] \times (\alpha^\star - \alpha_2^\star) + 1/A_2^2W^2\sigma^2 > 0
\end{split}
\end{equation}

This inequality always holds due to Lemma \ref{lemma_inbetween} as \(\alpha^\star<\alpha_2^*\) for all \(A_2<A_1\) (as \(\alpha_1^\star<\alpha_2^\star\)) and \(\alpha^\star=\alpha_2^*\) for all \(A_2=A_1\) (as \(\alpha_1^\star=\alpha_2^\star\))\footnote{Note that the second term of the LHS in the inequality above is 0 in the individual case when \(\alpha^\star=\mu/A_2W\sigma^2=\alpha_2^\star\). So the inequality always holds in the individual case for \(\mu>0\).
}.\end{proof}

\subsubsection*{Proposition \ref{prop_monotonicity_sum_CRRA}}

\begin{proof}
    Given Lemma \ref{lemma_IFT}, we need to show that \(\frac{\partial M_i(\alpha^\star)}{\partial \gamma_i}<0\). Given the definition in equation (\ref{def_ExpectedUtility_CRRA}), individual expected utility under CRRA can be written as:
    
    \begin{equation}
    \mathbb{E} (U_i) =  \frac{W^{1-\gamma_i}}{1-\gamma_i}
    \mathbb{E}\left(\left[(1+r_f+\alpha\tilde{x})\right]^{1-\gamma_i}\right)
    \end{equation}
    Then, one can write:
   \begin{equation}
   \begin{split}
        M_i(\alpha^\star) \equiv \frac{\partial \mathbb{E} (U_{i}(\alpha^\star))}{\partial \alpha} = 
        \frac{W^{1-\gamma_i}}{1-\gamma_i}
    \mathbb{E}\left( \frac{\partial\left[(1+r_f+\alpha\tilde{x})\right]^{1-\gamma_i}}{\partial \alpha} \right) = \\
    W^{1-\gamma_i} \mathbb{E} \left[ \tilde{x}(1+r_f+\alpha \tilde{x})^{-\gamma_i} \right]
    \end{split}
    \end{equation}
   Set \(W=1,r_f=0\). We can then write,
    \begin{equation}
    \begin{split}
        \frac{\partial M_i(\alpha^\star)}{\partial \gamma_i} = \frac{\partial \mathbb{E} \left[ \tilde{x}(1+\alpha \tilde{x})^{-\gamma_i} \right] }{\partial \gamma_i} =  \mathbb{E} \left[ \frac{\partial \left (\tilde{x}(1+\alpha \tilde{x})^{-\gamma_i}\right) }{\partial \gamma_i} \right] = \\
        \mathbb{E}\left[ -\tilde{x}ln(1+\alpha\tilde{x})(1+\alpha\tilde{x})^{-\gamma_i} \right]
    \end{split}
    \end{equation}
    Let \(\tilde{y} \equiv 1+\alpha\tilde{x}\). Hence, \(\tilde{x}=\frac{\tilde{y}-1}{\alpha}\) and we can rewrite the above equality as 

    \begin{equation}
        \frac{\partial M_i(\alpha^\star)}{\partial \gamma_i} = -\frac{1}{\alpha}\mathbb{E}\left[ (\tilde{y}-1)ln(\tilde{y})(\tilde{y})^{-\gamma_i} \right]
    \end{equation}
     Since the term \((\tilde{y}-1)ln(\tilde{y})\) is strictly positive for all non-zero \(\tilde{y}\), we obtain \(\frac{\partial M_i(\alpha^\star)}{\partial \gamma_i}<0\).
\end{proof}

\subsubsection*{Proposition \ref{prop_expprod}}

\begin{proof}
First, assume the household maximizes the expected product of CARA utilities. The objective can then be written as

\begin{equation}
    \mathbb{E} \left[ \prod_{i} \left(   \frac{ - exp(-A_i C_h)}{A_i} \right)^{\lambda_i} \right] 
         = \frac{\mathbb{E} \left[-exp(-\sum_i\lambda_i A_iC_h) \right]}{A_1^{\lambda_1} A_2^{\lambda_2}}
\end{equation}
This is equivalent to the individual CARA maximization problem with a renamed coefficient of risk aversion (let \( A_{h} = \sum_i\lambda_i A_i\)), solution to which is always decreasing in the risk aversion. 

The same arguments can be used in the case of CRRA utilities. Assume the household maximizes the expected product of CRRA utilities. The objective can then be written as

\begin{equation}
    \mathbb{E} \left[ \prod_{i} \left(   \frac{C_h^{1-\gamma_i}}{1-\gamma_i} \right)^{\lambda_i} \right] 
         = \frac{\mathbb{E} \left[C_h^{\lambda_1(1-\gamma_1)+\lambda_2(1-\gamma_2)} \right]}{(1-\gamma_1)^{\lambda_1}(1-\gamma_2)^{\lambda_2}}=  \frac{\mathbb{E} \left[C_h^{1-\sum_i \lambda_i\gamma_i} \right]}{(1-\gamma_1)^{\lambda_1}(1-\gamma_2)^{\lambda_2}}
\end{equation}
This is equivalent to the individual CRRA maximization problem with a renamed coefficient of risk aversion (let \( \gamma_{h} = \sum_i\lambda_i \gamma_i\)), solution to which is always decreasing in the risk aversion.
\end{proof}

\subsubsection*{Proposition \ref{prop_prodexp}}

\begin{proof} The objective can be written as:
\begin{equation}
\begin{split}
  \mathbb{E} [U_{h}(C_h)] = 
    \prod_{i} \left( \mathbb{E} \left[  \frac{ - exp(-A_i C_h)}{A_i} \right] \right)^{\lambda_i}   =   \prod_{i} \left( \frac{ \mathbb{E} \left[   - exp(-A_i C_h)\right] }{A_i} \right)^{\lambda_i}
\end{split}
\end{equation}

Applying the property of exponents we get:

\begin{equation}
\begin{split}
    \mathbb{E} (U_{h}) =  \prod_{i}  \left( \frac{-exp \left(-A_iW(1+r_f+\alpha\mu) +\frac{A_i^2W^2\alpha^2\sigma^2}{2} \right) }{A_i} \right)^{\lambda_i} \\
    =\frac{exp \left(-(\lambda_1A_1+\lambda_2A_2)W(1+r_f+\alpha\mu) +\frac{(\lambda_1A_1^2+\lambda_2A_2^2)W^2\alpha^2\sigma^2}{2} \right) }{A_1^{\lambda_1}A_2^{\lambda_2}}
\end{split}
    \label{obj_ProdExp_CARA_2}
\end{equation}

The F.O.C. of (\ref{obj_ProdExp_CARA_2}) is (the denominator can be ignored as it does not depend on \(\alpha\)):

\begin{equation}
    \begin{split}
        exp \left(-(\lambda_1A_1+\lambda_2A_2)W(1+r_f+\alpha\mu) +\frac{(\lambda_1A_1^2+\lambda_2A_2^2)W^2\alpha^2\sigma^2}{2} \right) \times \\
        \left( -(\lambda_1A_1+\lambda_2A_2)W\mu + (\lambda_1A_1^2+\lambda_2A_2^2)W^2\alpha\sigma^2 \right) = 0
    \end{split}
\end{equation}

Since the first term in the multiplication above is always positive, the second term has to be equal to 0. Hence, we obtain:
\begin{equation}
\label{sol_ProdExp_CARA}
 \alpha^\star = \frac{(\lambda_1A_1+\lambda_2A_2)\mu}{(\lambda_1A_1^2+\lambda_2A_2^2)W\sigma^2    }
\end{equation}
The solution is decreasing in \(A_2\) if and only if:

\begin{equation}
\label{cond_ProdExp_CARA}
A_2 > A_1 \frac{\sqrt{\lambda_1(\lambda_1+\lambda_2)} - \lambda_1}{\lambda_2}    
\end{equation}
This inequality does not hold for all \(A_1,A_2,\lambda_1,\lambda_2\). \end{proof}

\subsubsection*{Proposition \ref{prop_summv}}

\begin{proof}
The F.O.C. of equation (\ref{obj_sum_MV}) is:
\begin{equation}
    \sum_i\lambda_i A_i \mu - \sum_i\lambda_i A_i^2W \alpha^\star \sigma^2 = 0
\end{equation}
So, we obtain\footnote{This means that the implicit household-level risk aversion is \( A_{h} = \frac{\sum_i\lambda_i A_i^2}{\sum_i\lambda_i A_i} \) }
\begin{equation}
 \alpha^\star = \frac{(\lambda_1A_1+\lambda_2A_2)\mu}{(\lambda_1A_1^2+\lambda_2A_2^2)W\sigma^2    }
\end{equation}
This is identical to the product of expected CARA utilities solution given in equation (\ref{sol_ProdExp_CARA}).
\end{proof}

\subsubsection*{Proposition \ref{prop_sumsmv}}
\label{sec_AppProp7proof}
\begin{proof}
The F.O.C. of equation (\ref{obj_sum_sMV}) is:
\begin{equation}
    \sum_i\lambda_i \mu - \sum_i\lambda_i A_i W\alpha^\star \sigma^2 = 0
\end{equation}

Assuming \(\sum_i\lambda_i=1\),
\begin{equation}
 \alpha^\star = \frac{\mu}{\sum_i\lambda_i A_i W \sigma^2}
\end{equation}
This is identical to the expected product of CARA utilities solution given in equation (\ref{sol_expprod_CARA}). \end{proof}

\subsubsection*{Proposition \ref{prop_nb}}
\label{sec_AppProp8proof}
\begin{proof}

\begin{figure}[th]
\centering

    \includegraphics[width=0.45\linewidth]{"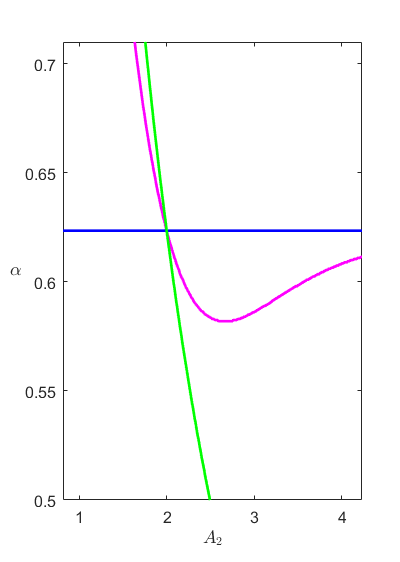"}
    \caption{\footnotesize 
    This figure plots the optimal risky share \(\alpha\), against the risk aversion of the second member of the household for CARA individual utility functions. The blue curve plots \(\alpha^\star_1\) and the green curve plots \(\alpha^\star_2\) which are the solutions to the single's problem and are defined in equation (\ref{obj_unitary}). The magenta curve depicts the household-level optimal risky share \(\alpha^\star\), which solves the objective in (\ref{obj_nb}). Set, \(\mu= 0.1, \ \sigma=0.2, \ r_f=0, \ W=2, \ \lambda_1=\lambda_2=0.5, \ \Lambda_1=\Lambda_2=-1\), and \(A_1=2\).}
    \label{fig_nonmonotonicity_NB}
\end{figure}

The objective is:
\begin{equation}
\label{obj_nb}
   \mathbb{E} [U_{h}(C_h)] = \prod_i \left( \frac{-exp \left(-A_iW(1+r_f+\alpha\mu) +\frac{A_i^2W^2\alpha^2\sigma^2}{2} \right)}{A_i} -\Lambda_i \right)^{\lambda_i} 
\end{equation}

Despite the fact that the F.O.C. is analytical, there is no closed form solution for \(\alpha^\star\). Figure \ref{fig_nonmonotonicity_NB} visualizes a solution to an example which proves the proposition.
\end{proof}

\subsection{Alternative Parametrization of the Classic Collective Approach}
\label{sec_AppVariations}

Examples below illustrate that non-monotonicity arises under model parametrization different from the one considered in Section \ref{section_Results}. Figure \ref{fig_sum_uniform_Matlab} plots the optimal risky shares assuming uniformly distributed risky asset return.

\begin{figure}[ht]
\centering
\begin{subfigure}{0.45\textwidth}
    \includegraphics[width=\linewidth]{"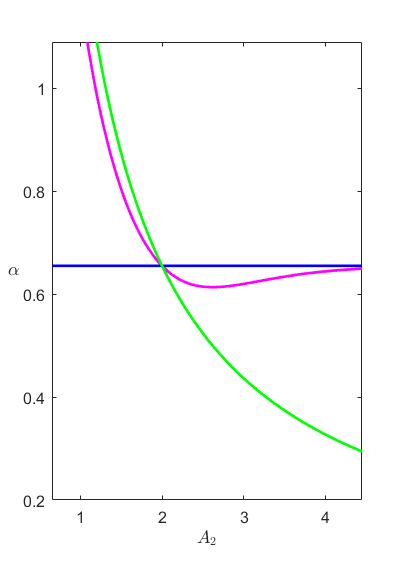"}
    \caption{CARA}
\end{subfigure}
\begin{subfigure}{0.45\textwidth}
    \includegraphics[width=\linewidth]{"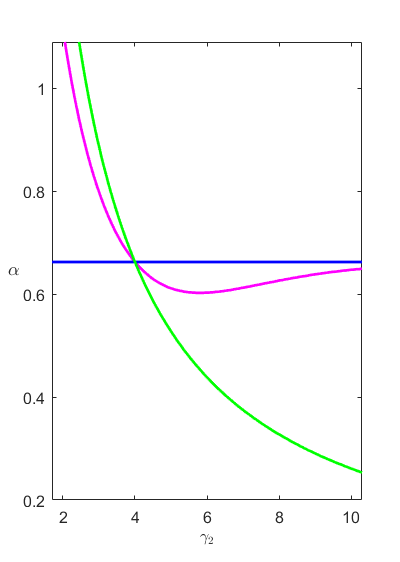"}
    \caption{CRRA}
\end{subfigure}
    \caption{\footnotesize
        This figure plots the optimal risky share \(\alpha\), against the risk aversion of the second member of the household for CARA (Panel A) and CRRA (Panel B) individual utility functions assuming  \(\tilde{x} \sim \mathcal{U}(-0.25, 0.45)\). Blue curves plot \(\alpha^\star_1\) and green curves plot \(\alpha^\star_2\) which are the solutions to the single's problem and are defined in equation (\ref{obj_unitary}). The magenta curves depict the household-level optimal risky share \(\alpha^\star\), which solves the objective in (\ref{obj_sum_public}). Set, \(r_f=0, \ W=2,\) and \(\lambda_1=\lambda_2=0.5\). In Panel A set \(A_1=2\) and in Panel B set \(\gamma_1=4\).}
    \label{fig_sum_uniform_Matlab}
\end{figure}

Figure \ref{fig_sum_variations_Matlab} plots the optimal risky shares for alternative levels of bargaining weights and risk aversion.

\begin{figure}[ht]
\centering
\begin{subfigure}{0.4\textwidth}
    \includegraphics[width=\linewidth]{"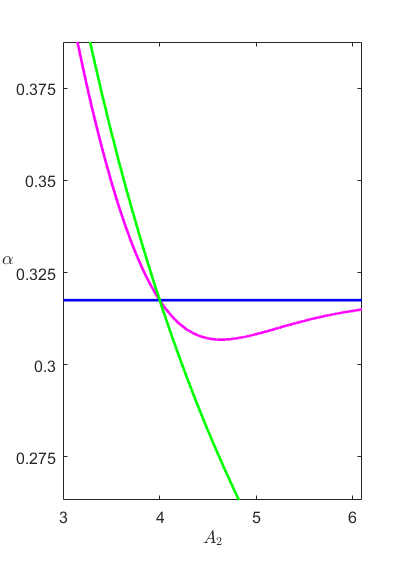"}
    \caption{CARA, \(A_1 = 4\)}
\end{subfigure}
\begin{subfigure}{0.4\textwidth}
    \includegraphics[width=\linewidth]{"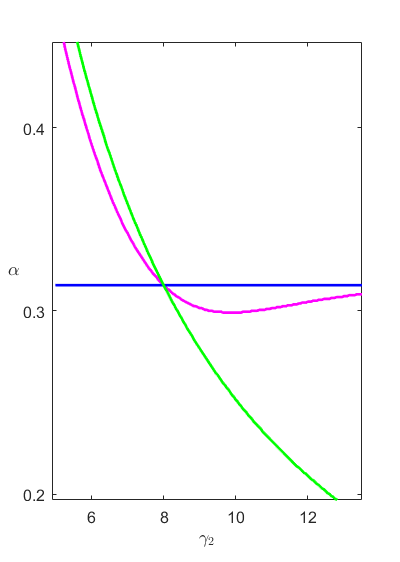"}
    \caption{CRRA, \(\gamma_1=8\)}
\end{subfigure}
\begin{subfigure}{0.4\textwidth}
    \includegraphics[width=\linewidth]{"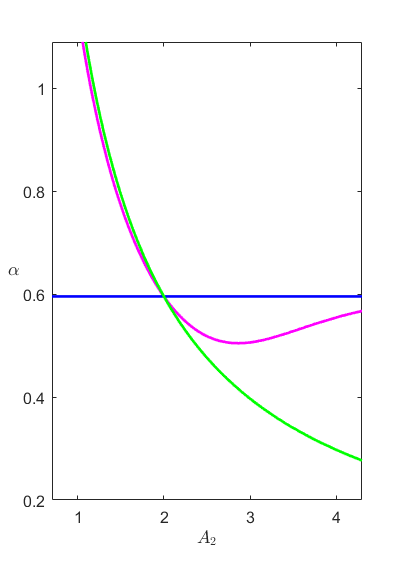"}
    \caption{CARA, \(\lambda_1=0.2\)}
\end{subfigure}
\begin{subfigure}{0.4\textwidth}
    \includegraphics[width=\linewidth]{"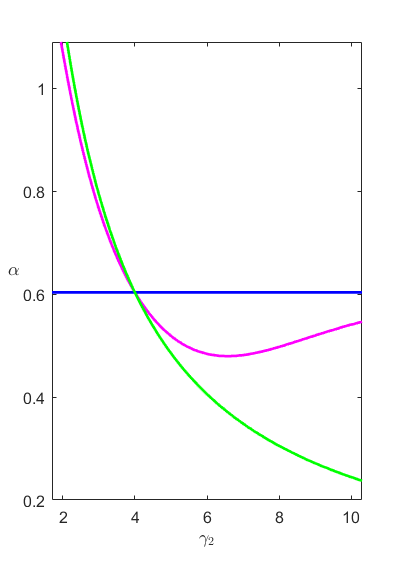"}
    \caption{CRRA, \(\lambda_1=0.2\)}
\end{subfigure}
    \caption{\footnotesize
    This figure plots the optimal risky share \(\alpha\), against the risk aversion of the second member of the household for CARA (Panels A and C) and CRRA (Panels B and D) individual utility functions under alternative levels of the first agent's risk aversion (Panels A and B) and bargaining weights (Panels C and D). Blue curves plot \(\alpha^\star_1\) and green curves plot \(\alpha^\star_2\) which are the solutions to the single's problem and are defined in equation (\ref{obj_unitary}). The magenta curves depict the household-level optimal risky share \(\alpha^\star\), which solves the objective in (\ref{obj_sum_public}). Set, \(\mu= 0.1, \ \sigma=0.2, \ r_f=0,\) and  \( W=2\). In Panels A and B set \(A_1=4\), and \(\gamma_1=8\) respectively and set \(\lambda_1=\lambda_2=0.5\). In Panels C and D set \(A_1=2\) and \(\gamma_1=4\) respectively and set \(\lambda_1=0.2\) and \(\lambda_2=0.8\).}
    \label{fig_sum_variations_Matlab}
\end{figure}

\clearpage
\newpage
\subsection{Wealth Splitting Problem}
\label{sec_AppWealthSplit}
Assume full private consumption \(C_1+C_2=C_h=W\). Household decides how to split wealth between the two members in order maximize the sum of expected utilities. Let \(C_1=\omega W\) and \(C_2=(1-\omega) W\).

\begin{equation}
    \max_{\omega} \; \mathbb{E} (U_{h}) = \max_{\omega} \; \sum_i \lambda_i \mathbb{E} [U_i^{crra}(C_i)]
    \label{obj_splitting_sum}
\end{equation}

The F.O.C. is:
\begin{equation}
    \lambda_1(\omega W)^{-\gamma_1} = \lambda_2((1-\omega) W)^{-\gamma_2} 
\end{equation}

That is, the marginal utilities weighted by the bargaining weights have to be equal. Observe that the risk aversion of a member affects the optimal split even though there is no uncertainty in this framework. At a sufficiently high (low) wealth level, the share allocated to person \(i\) decreases (increases) with her risk aversion \(\gamma_i\). Now, rewrite the F.O.C. as:

\begin{equation}
    ln \lambda_1-\gamma_1ln(\omega W)=ln \lambda_2-\gamma_2ln((1-\omega)W)
\end{equation}

In order to eliminate \(\gamma_1\) and \(\gamma_2\) from the optimality condition, one needs to set \(ln(\omega W)=ln((1-\omega) W)=0\). This requires \(W=2\) and \(\lambda_1=\lambda_2\). When wealth equals 2, each member gets 1 unit of consumption and the optimal wealth split does not depend on their risk aversions. This is because at \(C_i=1\) the marginal utility of consumption \(C_i^{-\gamma}\) does not depend on the risk aversion \(\gamma_i\). There is no equivalent statement for CARA as marginal CARA utility is \(\exp(-A_iC_i)\) and it always depends on the absolute risk aversion.

\end{document}